%% @texfile{
%%     filename="TodaSol.tex",
%%     version="1.1.1",
%%     date="Nov-2008",
%%     cdate="20071012",
%%     filetype="LaTeX2e",
%%     journal="Math. Z. 262, 585-602 (2009)",
%%     doi="10.1007/s00209-008-0391-9",
%%     copyright="Copyright (C) H. Krueger and G. Teschl".
%%     }

\documentclass{amsart}
\usepackage{hyperref}
\usepackage{curves}

\unitlength1cm

%%%%%%%%%
\newcommand{\arxiv}[1]{\href{http://arxiv.org/abs/#1}{arXiv:#1}}
\newcommand*{\mailto}[1]{\href{mailto:#1}{\nolinkurl{#1}}}

%%%%%%%%%THEOREMS%%%%%%%%%%%%%%%%%%%%%%%%%%%%%%%%%%
\newtheorem{theorem}{Theorem}[section]
\newtheorem{lemma}[theorem]{Lemma}
\newtheorem{corollary}[theorem]{Corollary}

\newtheorem{hypothesis}[theorem]{Hypothesis {\bf H.}\hspace*{-0.6ex}}

%%%%%%%%%%%%%%FONTS%%%%%%%%%%%%%%%%%%%%%%%%%%%%%%%%
\newcommand{\R}{\mathbb{R}}
\newcommand{\N}{\mathbb{N}}
\newcommand{\Z}{\mathbb{Z}}
\newcommand{\C}{\mathbb{C}}
\newcommand{\T}{\mathbb{T}}

%%%%%%%%%%%%%%%%%%ABBRS%%%%%%%%%%%%%%%%%%%%%%%%%%%%%

\newcommand{\nn}{\nonumber}
\newcommand{\be}{\begin{equation}}
\newcommand{\ee}{\end{equation}}
\newcommand{\bea}{\begin{eqnarray}}
\newcommand{\eea}{\end{eqnarray}}

\newcommand{\ol}{\overline}
\newcommand{\ti}{\tilde}
\newcommand{\wti}{\widetilde}

\newcommand{\id}{\mathbb{I}}
\newcommand{\I}{\mathrm{i}}
\newcommand{\E}{\mathrm{e}}
\newcommand{\ind}{\mathrm{ind}}
\newcommand{\re}{\mathrm{Re}}

\DeclareMathOperator{\res}{Res}
\newcommand{\lz}{\ell^2(\Z)}

\def\Xint#1{\mathchoice
   {\XXint\displaystyle\textstyle{#1}}%
   {\XXint\textstyle\scriptstyle{#1}}%
   {\XXint\scriptstyle\scriptscriptstyle{#1}}%
   {\XXint\scriptscriptstyle\scriptscriptstyle{#1}}%
   \!\int}
\def\XXint#1#2#3{{\setbox0=\hbox{$#1{#2#3}{\int}$}
     \vcenter{\hbox{$#2#3$}}\kern-.5\wd0}}
\def\dashint{\Xint-}

%%%%%%%%%%%%%%%GREEK%%%%%%%%%%%%%%%%%%%%%%%%%%%%%%%%
\newcommand{\eps}{\varepsilon}

\newcommand{\lam}{\lambda}
\newcommand{\gam}{\gamma}

%%%%%%%%%%%%%%%%%%%%%%%%NUMBERING%%%%%%%%%%%%%%%%%%%%

\numberwithin{equation}{section}

%%%%%%
\newcommand{\sigI}{\begin{pmatrix} 0 & 1 \\ 1 & 0 \end{pmatrix}}
\newcommand{\ssigI}{\big(\begin{smallmatrix} 0 & 1 \\ 1 & 0\end{smallmatrix}\big)}
\newcommand{\rI}{\begin{pmatrix}  1 & 1 \end{pmatrix}}
\newcommand{\rN}{\begin{pmatrix}  0 & 0 \end{pmatrix}}

%%%%%%%%%%%%%%%%%%%%%%%%%%%%%%%%%%%%%%%%%%%%%%%%%%%

\begin{document}

\title[Long-Time Asymptotics for the Toda Lattice]{Long-Time Asymptotics for the Toda Lattice in the Soliton Region}

\author[H. Kr\"uger]{Helge Kr\"uger}
\address{Department of Mathematics\\ Rice University\\ Houston\\ TX 77005\\ USA}
\email{\mailto{helge.krueger@rice.edu}}
\urladdr{\url{http://math.rice.edu/~hk7/}}

\author[G. Teschl]{Gerald Teschl}
\address{Faculty of Mathematics\\
Nordbergstrasse 15\\ 1090 Wien\\ Austria\\ and International Erwin Schr\"odinger
Institute for Mathematical Physics, Boltzmanngasse 9\\ 1090 Wien\\ Austria}
\email{\mailto{Gerald.Teschl@univie.ac.at}}
\urladdr{\url{http://www.mat.univie.ac.at/~gerald/}}

\thanks{Research supported by the Austrian Science Fund (FWF) under Grant No.\ Y330.}
\thanks{Math. Z. {\bf 262}, 585--602 (2009)}

\keywords{Riemann--Hilbert problem, Toda lattice, solitons}
\subjclass[2000]{Primary 37K40, 37K45; Secondary 35Q15, 37K10}

\begin{abstract}
We apply the method of nonlinear steepest descent to compute the long-time
asymptotics of the Toda lattice for decaying initial data in the soliton region. In addition,
we point out how to reduce the problem in the remaining region to the known case without solitons.
\end{abstract}

\maketitle

\section{Introduction}

In this paper we want to compute the long time asymptotics for
the doubly infinite Toda lattice which reads in Flaschka's
variables (see e.g.\ \cite{tjac}, \cite{taet}, or \cite{ta})
\be \label{tl}
\aligned
\dot b(n,t) &= 2(a(n,t)^2 -a(n-1,t)^2),\\
\dot a(n,t) &= a(n,t) (b(n+1,t) -b(n,t)),
\endaligned
\ee
$(n,t) \in \Z \times \R$. Here the dot denotes differentiation with
respect to time. We will consider solutions $(a,b)$ satisfying
\be \label{decay}
\sum_n |n|^l (|a(n,t) - \frac{1}{2}| + |b(n,t)|) < \infty
\ee
for every $l\in\N$ for one (and hence for all, see \cite{tjac}) $t\in\R$. It is
well-known that this initial value problem has unique global solutions
which can be computed via the inverse scattering transform \cite{tjac}.

The long-time asymptotics for this problem were first given by Novokshenov and
Habibullin \cite{nh} and were later made rigorous by Kamvissis \cite{km}, however,
only in the case without solitons.
The purpose of the present paper is to finally fill this gap and show how to include solitons.
As in \cite{km}, our approach is based on the nonlinear steepest
descent analysis for oscillatory Riemann--Hilbert problems from Deift and Zhou \cite{dz}.
It turns out that in the case of solitons, two new phenomena enter the scene
which require significant adaptions to the original method of Deift and Zhou.
Of course our technique also applies to other soliton equations, e.g., the Korteweg--de Vries equation.

First of all, it is well-known that there is a subtle nonuniqueness issue for the involved Riemann--Hilbert problems
(see e.g.\ \cite[Chap.~38]{bdt}).
In fact, in certain exceptional sets the corresponding vanishing Riemann--Hilbert problem has a nontrivial solution
and hence by Fredholm theory the corresponding matrix Riemann--Hilbert problem has no solution at all.
This problem does not affect the similarity region, since it is easy to see that there it
does not happen for sufficiently large times. However, in the soliton region,
this occurs precisely in the neighborhoods of the single solitons. To avoid this problem
we will work directly with the vector Riemann--Hilbert problem and impose a symmetry condition in order to
ensure uniqueness. This also has the advantage that it eliminates the step of going forth
and back between the vector and matrix Riemann--Hilbert problem. It should be pointed out here that even the symmetry
condition alone does not guarantee uniqueness, rather we will need existence of a certain
solution with an additional property. We will also demonstrate that this additional property is
in fact necessary.

Secondly, in the regions of the single solitons, the zeroth order asymptotics are not equal
to zero but given by a one soliton solution. Hence for the usual perturbation argument based
on the second resolvent identity to work, a uniform bound for the inverse of the singular integral equation
associated with the one soliton solution is needed. Unfortunately, such a bound
cannot be easily obtained. To overcome this problem we will shift the leading
asymptotics from the one soliton solution to the inhomogeneous part of the singular integral
equation and craft our Cauchy kernel in such a way that it preserves the pole conditions for this
single soliton.

To state our main result, we begin by recalling that the sequences $a(n,t)$, $b(n,t)$, $n\in\Z$,
for fixed $t\in\R$, are uniquely determined by their scattering data, that is, by the right reflection
coefficient $R_+(z,t)$, $|z|=1$ and the eigenvalues $\lam_j\in(-\infty,-1)\cup(1,\infty)$, $j=1,\dots, N$
together with the corresponding right norming constants $\gam_{+,j}(t)>0$, $j=1,\dots, N$.
Rather than in the complex plane, we will work on the unit disc using the usual
Joukowski transformation
\be\label{defzlam}
\lam = \frac{1}{2} \left(z + \frac{1}{z}\right),\quad z= \lam - \sqrt{\lam^2 -1}, \qquad
\lam\in\C, \: |z|\leq 1.
\ee
In these new coordinates the eigenvalues $\lam_j\in(-\infty,-1)\cup(1,\infty)$ will be denoted by
$\zeta_j\in(-1,0)\cup(0,1)$. The continuous spectrum $[-1,1]$ is mapped to the unit circle
$\T$. Moreover, the phase of the associated Riemann--Hilbert problem is given by
\begin{equation} \label{eq:Phi}
\Phi(z)=z-z^{-1}+2 \frac{n}{t} \log(z).
\end{equation}
and the stationary phase points, $\Phi'(z)=0$, are denoted by
\be
z_0=  -\frac{n}{t} - \sqrt{(\frac{n}{t})^2 -1}, \quad z_0^{-1}= -\frac{n}{t} + \sqrt{(\frac{n}{t})^2 -1},
\qquad \lam_0=-\frac{n}{t}.
\ee
For $\frac{n}{t}<-1$ we have $z_0\in(0,1)$, for $-1\le \frac{n}{t} \le1$
we have $z_0\in\T$ (and hence $z_0^{-1}=\ol{z_0}$), and for $\frac{n}{t}>1$
we have $z_0\in(-1,0)$. For $|\frac{n}{t}|>1$ we will also need the value
$\zeta_0\in(-1,0)\cup(0,1)$ defined via $\re(\Phi(\zeta_0))=0$, that is,
\be
\frac{n}{t} = -\frac{\zeta_0 - \zeta_0^{-1}}{2\log(|\zeta_0|)}.
\ee
We will set $\zeta_0=-1$ if $|\frac{n}{t}|\le 1$ for notational convenience.
A simple analysis shows that for $\frac{n}{t}<-1$ we have $0<\zeta_0 < z_0 <1$ and
for $\frac{n}{t}>1$ we have $-1<z_0 < \zeta_0 <0$.

Furthermore, recall that the transmission coefficient $T(z)$, $|z|\le 1$, is time independent
and can be reconstructed using the Poisson--Jensen formula. In particular, we define
the partial transmission coefficient with respect to $z_0$ by
\begin{align}\nn
T(z,z_0) &=\\ \label{def:Tzz0}
& \begin{cases}
\prod\limits_{\zeta_k\in(\zeta_0,0)} |\zeta_k| \frac{z-\zeta_k^{-1}}{z-\zeta_k}, & z_0 \in (-1,0), \\
\left(\prod\limits_{\zeta_k\in(-1,0)} |\zeta_k| \frac{z-\zeta_k^{-1}}{z-\zeta_k} \right)
\exp\left(\frac{1}{2\pi\I}\int\limits_{\ol{z_0}}^{z_0}\log(|T(s)|) \frac{s+z}{s-z} \frac{ds}{s}\right), & |z_0| = 1, \\
\left(\prod\limits_{\zeta_k\in(-1,0)\cup(\zeta_0,1)}\!\!\! |\zeta_k| \frac{z-\zeta_k^{-1}}{z-\zeta_k} \right)
\exp\left(\frac{1}{2\pi\I}\int\limits_{\T}\log(|T(s)|) \frac{s+z}{s-z} \frac{ds}{s}\right), & z_0 \in (0,1).
\end{cases}
\end{align}
Here, in the case $z_0\in\T$,  the integral is to be taken along the arc $\Sigma(z_0)= \{z \in\T | \re(z)<\re(z_0)\}$
oriented counterclockwise. For $z_0\in(-1,0)$ we set $\Sigma(z_0)=\emptyset$ and for $z_0\in(0,1)$ we
set $\Sigma(z_0)=\T$. Then $T(z,z_0)$ is meromorphic for $z\in\C\backslash\Sigma(z_0)$.
Observe $T(z,z_0)=T(z)$ once $z_0\in(0,1)$ and $(0,\zeta_0)$ contains no eigenvalues. Moreover,
$T(z,z_0)$ can be computed in terms of the scattering data since $|T(z)|^2= 1- |R_+(z,t)|^2$.

Moreover, we set
\begin{align}\nn
T_0(z_0) &= T(0,z_0)\\
&= \begin{cases}
\prod\limits_{\zeta_k\in(\zeta_0,0)} |\zeta_k|^{-1}, & z_0 \in (-1,0),\\
\left(\prod\limits_{\zeta_k\in(-1,0)} |\zeta_k|^{-1} \right)
\exp\left(\frac{1}{2\pi\I}\int\limits_{\ol{z_0}}^{z_0}\log(|T(s)|) \frac{ds}{s}\right), & |z_0| = 1, \\
\left(\prod\limits_{\zeta_k\in(-1,0)\cup(\zeta_0,1)} |\zeta_k|^{-1} \right)
\exp\left(\frac{1}{2\pi \I}\int\limits_{\T}\log(|T(s)|) \frac{ds}{s}\right), & z_0 \in (0,1),
\end{cases}
\end{align}
and
\begin{align} \nn
T_1(z_0) &= \frac{\partial}{\partial z} \log T(z,z_0) \Big|_{z=0}\\
&= \begin{cases}
\sum\limits_{\zeta_k\in(\zeta_0,0)} (\zeta_k^{-1} -\zeta_k), & z_0 \in (-1,0),\\
\sum\limits_{\zeta_k\in(-1,0)} (\zeta_k^{-1} -\zeta_k) +
\frac{1}{\pi\I}\int\limits_{\ol{z_0}}^{z_0}\log(|T(s)|) \frac{ds}{s^2}, & |z_0| = 1, \\
\sum\limits_{\zeta_k\in(-1,0)\cup(\zeta_0,1)} (\zeta_k^{-1} -\zeta_k) +
\frac{1}{\pi \I}\int\limits_{\T}\log(|T(s)|) \frac{ds}{s^2}, & z_0 \in (0,1).
\end{cases}
\end{align}

\begin{theorem}\label{thm:asym}
Assume \eqref{decay} and abbreviate by $c_k= -\frac{\zeta_k - \zeta_k^{-1}}{2\log(|\zeta_k|)}$
the velocity of the $k$'th soliton determined by $\re(\Phi(\zeta_k))=0$.
Then the asymptotics in the soliton region, $|n/t| \geq 1 + C/t \log(t)^2$ for some
$C>0$, are as follows.

Let $\eps > 0$ sufficiently small such that the intervals
$[c_k-\eps,c_k+\eps]$, $1\le k \le N$, are disjoint and lie inside $(-\infty,-1)\cup(1,\infty)$.

If $|\frac{n}{t} - c_k|<\eps$ for some $k$, one has
\begin{align}\nn
\prod_{j=n}^\infty (2 a(j,t)) &= T_0(z_0) \left(
\sqrt{\frac{1-\zeta_k^2 + \gam_k(n,t)}{1-\zeta_k^2 + \gam_k(n,t) \zeta_k^2}} + O(t^{-l}) \right),\\
\sum_{j=n+1}^\infty b(j,t) &= \frac{1}{2} T_1(z_0) -
\frac{\gam_k(n,t)  \zeta_k (\zeta_k^2-1)}{2((\gam_k(n,t) -1) \zeta_k^2+1)} + O(t^{-l}),
\end{align}
for any $l \geq 1$, where
\be
\gam_k(n,t) = \gam_k T(\zeta_k,-c_k - \sqrt{c_k^2 -1})^{-2} \E^{t (\zeta_k - \zeta_k^{-1})} \zeta_k^{2n}.
\ee

If $|\frac{n}{t} -c_k| \geq \eps$, for all $k$, one has
\begin{align}\nn
\prod_{j=n}^\infty (2 a(j,t)) &= T_0(z_0) \left(1 + O(t^{-l}) \right),\\
\sum_{j=n+1}^\infty b(j,t) &= \frac{1}{2} T_1(z_0) + O(t^{-l}),
\end{align}
for any $l \geq 1$.
\end{theorem}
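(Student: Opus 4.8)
The strategy is the nonlinear steepest descent analysis of Deift--Zhou, adapted as outlined in the introduction. Both quantities in the theorem are encoded in the solution $m(z,n,t)=(m_1,m_2)$ of the associated vector Riemann--Hilbert problem: expanding $m$ about the base point $z=0$, the coefficients of $m(0)$ and $m'(0)$ reproduce $\prod_{j=n}^\infty(2a(j,t))$ and $\sum_{j=n+1}^\infty b(j,t)$, so it suffices to compute these two coefficients asymptotically. The first step is to conjugate the jump by a diagonal matrix built from the partial transmission coefficient $T(z,z_0)$. This simultaneously (i) redistributes the eigenvalues, absorbing into the background all solitons lying on the appropriate side of the threshold $\zeta_0$ and thereby producing the global prefactors through $T_0(z_0)=T(0,z_0)$ and $T_1(z_0)=\partial_z\log T(z,z_0)\big|_{z=0}$, and (ii) brings the oscillatory jump on $\T$ into the factorized form $v=(b_-)^{-1}b_+$ suited for opening lenses. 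The transformation of the $k$'th norming constant under this conjugation is precisely the factor $T(\zeta_k,-c_k-\sqrt{c_k^2-1})^{-2}$ appearing in $\gam_k(n,t)$.

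Next I would use that we work strictly inside the soliton region. Since the stationary phase equation $\Phi'(z)=0$ has solutions on $\T$ only for $|n/t|\le 1$, the phase has no stationary point on the unit circle here; after opening the lens the jump on the deformed contour carries a factor $\E^{\pm t\Phi}$ with $\re\Phi$ of definite sign, and hence decays. The rate of decay is governed by the distance of $z_0$ from $\T$, which is of order $\sqrt{(n/t)^2-1}$, and the defining threshold $|n/t|\ge 1+C\log(t)^2/t$ is chosen exactly so that this decay beats every power of $t$; consequently the entire contribution of the continuous spectrum is $O(t^{-l})$. What remains is a meromorphic problem carrying only the soliton poles. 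The decisive point is that $\re\Phi(\zeta_j)=0$ exactly when $n/t=c_j$: for every $j$ with $|n/t-c_j|\ge\eps$ the quantity $\re\Phi(\zeta_j)$ is bounded away from zero, so that after conjugation the corresponding pole condition is exponentially small and contributes only to the error. In the case $|n/t-c_k|\ge\eps$ for all $k$ this leaves only the background, yielding the second pair of formulas; in the case $|n/t-c_k|<\eps$ exactly one pole, at $\zeta_k$, survives.

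It then remains to solve the surviving one-soliton problem explicitly: trivial jump, the symmetry $m(1/z)=m(z)\ssigI$, normalization at $z=0$, and a single pole at $\zeta_k$ together with its reflected partner at $\zeta_k^{-1}$ carrying the norming constant $\gam_k(n,t)$. This problem admits an explicit rational solution, and reading off its value and derivative at $z=0$ produces the closed-form square-root expression for $\prod_{j=n}^\infty(2a(j,t))$ and the rational expression for $\sum_{j=n+1}^\infty b(j,t)$; letting $\gam_k(n,t)\to 0$ recovers the background and confirms consistency between the two cases.

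The main obstacle is the error estimate controlling the difference between the true solution and this one-soliton model by $O(t^{-l})$. The natural approach---passing to the singular integral equation governed by the Cauchy operator $C_w$ and comparing with the model via the second resolvent identity---requires a bound, uniform in $t$, for the inverse of the operator $\id-C_w$ associated with the one soliton, and such a bound is not readily available, precisely because the vanishing Riemann--Hilbert problem can have nontrivial solutions in the soliton neighborhoods. To avoid this I would not use the one soliton as the model operator; instead I would craft a modified Cauchy kernel that already enforces the pole conditions at $\zeta_k$ and $\zeta_k^{-1}$, so that the complete one-soliton solution is shifted into the inhomogeneous term of the integral equation. The remaining operator is then a perturbation of the identity whose non-identity part comes solely from the exponentially small jump on $\T$ and from the inactive solitons; its inverse is therefore uniformly bounded, and expanding the resolvent and tracking the effect on $m(0)$ and $m'(0)$ yields the stated $O(t^{-l})$ corrections. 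Finally I would confirm that the symmetry condition, supplemented by the additional property flagged in the introduction, singles out the correct solution at every stage, so that the reconstruction formulas remain valid throughout the deformation.
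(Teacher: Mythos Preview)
Your proposal is correct and follows essentially the same route as the paper: conjugation by the partial transmission coefficient $T(z,z_0)$ (together with the pole-flipping of Lemma~\ref{lem:twopolesinc}) to render all inactive soliton jumps exponentially small, Deift--Zhou deformation to kill the continuous-spectrum jump on $\T$ to $O(t^{-l})$, explicit solution of the remaining one-soliton model via Lemma~\ref{lem:singlesoliton}, and the error control via the modified Cauchy kernel $\Omega_\zeta$ of Appendix~\ref{sec:sieq} which shifts $m_0$ into the inhomogeneous term so that Theorem~\ref{thm:remcontour} applies with $w_0^t\equiv 0$. Your identification of the uniform-inverse obstacle and its resolution matches the paper's exactly.
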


In particular, we recover the well-known fact that the solution splits into a sum of independent solitons
where the presence of the other solitons and the radiation part corresponding to the continuous spectrum
manifests itself in phase shifts given by $T(\zeta_k,-c_k - \sqrt{c_k^2 -1})^{-2}$. Indeed, notice that
for $\zeta_k\in (-1,0)$ this term just contains product over the Blaschke factors corresponding to solitons
$\zeta_j$ with $\zeta_k<\zeta_j$. For $\zeta_k\in (0,1)$ we have the product over the Blaschke factors
corresponding to solitons $\zeta_j\in(-1,0)$, the integral over the full unit circle, plus the product over the
Blaschke factors corresponding to solitons $\zeta_j$ with $\zeta_k>\zeta_j$.

The proof will be given at the end of Section~\ref{sec:solreg}. Furthermore,
in the remaining regions the analysis in Section~\ref{sec:solreg} also shows
that the Riemann--Hilbert problem reduces to one without
solitons. In fact, away from the soliton region, the asymptotics are given by
\begin{align}\nn
\prod_{j=n}^\infty (2 a(j,t)) &= T_0(-1) \prod_{j=n}^\infty (2 \ti{a}(j,t)) \left(1 + O(t^{-l}) \right),\\
\sum_{j=n+1}^\infty b(j,t) &= \frac{1}{2} T_1(-1) + \sum_{j=n+1}^\infty \ti{b}(j,t) + O(t^{-l}),
\end{align}
where $\ti{a}(n,t)$, $\ti{b}(n,t)$ are the solutions corresponding to the case without solitons and
with $R_+(z,0)$ replaced by
\be
\ti{R}_+(z,0) = T(z,-1)^{-2} R_+(z,0).
\ee
Note that the Blaschke product
\[
T(z,-1)= \prod\limits_{\zeta_k\in(-1,0)} |\zeta_k| \frac{z-\zeta_k^{-1}}{z-\zeta_k}
\]
satisfies $|T(z,-1)|=1$ for $z\in\T$.
Hence everything is reduced to the case studied in \cite{km}.

Finally we remark that the same method can be used to handle solitons on a periodic
background \cite{krt} (cf.\ also \cite{emt4}, \cite{kt}, \cite{kt2}).

\section{The Inverse scattering transform and the Riemann--Hilbert problem}
\label{sec:istrhp}

In this section we want to derive the Riemann--Hilbert problem from scattering theory.
The special case without eigenvalues was first given in Kamvissis \cite{km}. The
eigenvalues will be added by appropriate pole conditions which are then
turned into jumps following Deift, Kamvissis, Kriecherbauer, and Zhou \cite{dkkz}.

For the necessary results from scattering theory respectively the inverse
scattering transform for the Toda lattice we refer to \cite{tist}, \cite{tivp}, \cite{tjac}.

Associated with $a(t), b(t)$ is a self-adjoint Jacobi operator
\begin{equation} \label{defjac}
H(t) = a(t) S^+  + a^-(t) S^-  + b(t)
\end{equation}
in $\lz$, where $S^\pm f(n) = f^\pm(n)= f(n\pm1)$ are the usual shift operators and
$\lz$ denotes the Hilbert space of square summable (complex-valued) sequences
over $\Z$. By our assumption \eqref{decay} the spectrum of $H$ consists of an absolutely
continuous part $[-1,1]$ plus a finite number of eigenvalues $\lam_k\in\R\backslash[-1,1]$,
$1\le k \le N$. In addition, there exist two Jost functions $\psi_\pm(z,n,t)$
which solve the recurrence equation
\be
H(t) \psi_\pm(z,n,t) = \frac{z+z^{-1}}{2} \psi_\pm(z,n,t), \qquad |z|\le 1,
\ee
and asymptotically look like the free solutions
\be
\lim_{n \to \pm \infty} z^{\mp  n} \psi_{\pm}(z,n,t) =1.
\ee
Both $\psi_\pm(z,n,t)$ are analytic for $0<|z|<1$ with smooth boundary values
for $|z|=1$.
The asymptotics of the two projections of the Jost function are
\be\label{eq:psiasym}
\psi_\pm(z,n,t) =  \frac{z^{\pm n}}{A_\pm(n,t)} \Big(1 + 2 B_\pm(n,t) z + O(z^2) \Big),
\ee
as $z \to 0$, where
\be \label{defAB}
\aligned
A_+(n,t) &= \prod_{j=n}^{\infty} 2 a(j,t), \quad
B_+(n,t)= -\sum_{j=n+1}^\infty b(j,t), \\
A_-(n,t) &= \!\!\prod_{j=- \infty}^{n-1}\! 2 a(j,t), \quad
B_-(n,t) = -\sum_{j=-\infty}^{n-1} b(j,t).
\endaligned
\ee

One has the scattering relations
\be \label{relscat}
T(z) \psi_\mp(z,n,t) =  \ol{\psi_\pm(z,n,t)} +
R_\pm(z,t) \psi_\pm(z,n,t),  \qquad |z|=1,
\ee
where $T(z)$, $R_\pm(z,t)$ are the transmission respectively reflection coefficients.
The transmission and reflection coefficients have the following well-known properties:

\begin{lemma}
The transmission coefficient $T(z)$ has a meromorphic extension to the
interior of the unit circle with simple poles at the images of the eigenvalues $\zeta_k$.
The residues of $T(z)$ are given by
\be\label{eq:resT}
\res_{\zeta_k} T(z) = - \zeta_k \frac{\gam_{+,k}(t)}{\mu_k(t)} = - \zeta_k \gam_{-,k}(t) \mu_k(t),
\ee
where
\be
\gam_{\pm,k}(t)^{-1} = \sum_{n\in\Z} |\psi_\pm(\zeta_k,n,t)|^2
\ee
and $\psi_- (\zeta_k,n,t) = \mu_k(t) \psi_+(\zeta_k,n,t)$.

Moreover,
\be \label{reltrpm}
T(z) \ol{R_+(z,t)} + \ol{T(z)} R_-(z,t)=0, \qquad |T(z)|^2 + |R_\pm(z,t)|^2=1.
\ee
\end{lemma}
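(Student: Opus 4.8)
The plan is to let everything rest on the Wronskian
\[
W_n(f,g) = a(n,t)\big(f(n)g(n+1) - f(n+1)g(n)\big),
\]
which is independent of $n$ whenever $f,g$ solve the Jacobi equation at the same spectral parameter $\lam=(z+z^{-1})/2$. First I would recover $T$ from the Jost solutions: taking the Wronskian of the scattering relation \eqref{relscat} (the $+$ case) against $\psi_+$ annihilates the reflection term, giving $T(z)\,W(\psi_-,\psi_+)=W(\ol{\psi_+},\psi_+)$. Letting $n\to+\infty$, where $a(n,t)\to\tfrac12$ and the normalizations force $\psi_+\sim z^n$, $\ol{\psi_+}\sim z^{-n}$ on $\T$, one computes $W(\ol{\psi_+},\psi_+)=(z-z^{-1})/2$, whence $T(z)=\tfrac12(z-z^{-1})/W(z)$ with $W(z):=W(\psi_-,\psi_+)$. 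Since $\psi_\pm(z,\cdot,t)$ are analytic in $0<|z|<1$, so is $W(z)$, and $T$ extends meromorphically. Its poles sit exactly at the zeros of $W(z)$, i.e.\ where $\psi_-$ and $\psi_+$ are linearly dependent; for $|z|<1$ both are square summable at their respective ends, so such a $z$ must make $\lam(z)$ an eigenvalue, placing the poles precisely at the $\zeta_k$. A standard argument shows these zeros are simple, hence so are the poles.

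For the residues I would differentiate $W$ at $\zeta_k$ using the discrete Green identity obtained from the Jacobi equation at two parameters $\lam(z),\lam(\ti z)$,
\[
W_m(\psi_-(z),\psi_+(\ti z)) - W_M(\psi_-(z),\psi_+(\ti z)) = (\lam(z)-\lam(\ti z))\sum_{n=m+1}^M \psi_-(z,n)\psi_+(\ti z,n).
\]
Differentiating in $\ti z$, passing to $\ti z\to z=\zeta_k$, and sending $m\to-\infty$, $M\to+\infty$, the boundary Wronskians are controlled by the asymptotics of the Jost functions and, using $\lam'(z)=\tfrac12(1-z^{-2})$, yield $W'(\zeta_k)=-\lam'(\zeta_k)\sum_n\psi_-(\zeta_k,n)\psi_+(\zeta_k,n)$. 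Since $\psi_-(\zeta_k,\cdot)=\mu_k(t)\psi_+(\zeta_k,\cdot)$ and the eigenfunctions are real, $\sum_n\psi_-\psi_+=\mu_k(t)\gam_{+,k}(t)^{-1}$. Then $\res_{\zeta_k}T=\tfrac12(\zeta_k-\zeta_k^{-1})/W'(\zeta_k)$, and the simplification $(\zeta_k-\zeta_k^{-1})/(1-\zeta_k^{-2})=\zeta_k$ produces $-\zeta_k\gam_{+,k}(t)/\mu_k(t)$; the second form in \eqref{eq:resT} follows from $\gam_{-,k}^{-1}=\mu_k^2\gam_{+,k}^{-1}$, obtained by comparing $\sum_n\psi_\pm(\zeta_k,n)^2$.

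Finally, the identities \eqref{reltrpm} are unitarity of scattering on $\T$. On $|z|=1$ one has $\ol z=z^{-1}$, so $\ol{\psi_\pm(z,\cdot)}$ again solves the Jacobi equation at the same real $\lam$, and the analogue of the computation above gives $W(\ol{\psi_\pm},\psi_\pm)=\pm(z-z^{-1})/2$ together with $W(\psi_\pm,\psi_\pm)=0$. Taking Wronskians of the two relations in \eqref{relscat} against $\psi_\pm$ and $\ol{\psi_\pm}$ and inserting these values produces linear relations among $T,R_\pm,\ol T,\ol{R_\pm}$; in particular $W(T\psi_-,\psi_+)$ and $W(\ol{T}\,\ol{\psi_+},\psi_-)$ combine to give $T\ol{R_+}+\ol{T}R_-=0$, while evaluating $W(T\psi_-,\ol{T}\,\ol{\psi_-})$ two ways yields $|T|^2+|R_+|^2=1$ (and symmetrically for $R_-$).

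The step I expect to be the main obstacle is the residue computation: justifying the vanishing of the boundary Wronskians $W_m,W_M$ in the limit relies on the exponential decay of genuine eigenfunctions, which holds only at $z=\zeta_k$ and not for nearby $z$, so the interchange of the $\ti z$-derivative with the limits $m\to-\infty$, $M\to+\infty$ must be handled carefully. In addition, the precise sign of $W'(\zeta_k)$ has to be tracked through the Joukowski map to land exactly on \eqref{eq:resT} rather than its negative.
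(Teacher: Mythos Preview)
The paper does not actually prove this lemma; it is stated as a well-known property of the scattering data, with the preceding sentence referring the reader to \cite{tist}, \cite{tivp}, \cite{tjac} for the necessary scattering theory. Your Wronskian approach is precisely the standard argument found in those references (see in particular \cite[Chap.~10]{tjac}), and your sketch is correct, including the residue formula and the unitarity identities. The concern you flag about interchanging the $\ti z$-derivative with the limits $m\to-\infty$, $M\to+\infty$ is handled in the literature by first fixing $z=\zeta_k$ in $\psi_-$ (so that $\psi_-(\zeta_k,\cdot)$ decays at both ends) and only varying $\ti z$ in $\psi_+$; then the boundary terms vanish for each finite $\ti z$ near $\zeta_k$ before any differentiation, and the remaining sum converges uniformly in $\ti z$, so the interchange is legitimate.
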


In particular one reflection coefficient, say $R(z,t)=R_+(z,t)$, and one set of
norming constants, say $\gam_k(t)= \gam_{+,k}(t)$, suffices. Moreover,
the time dependence is given by:

\begin{lemma}
The time evolutions of the quantities $R_+(z,t)$, $\gam_{+,k}(t)$ are given by
\begin{align}
R(z,t) &= R(z) \E^{t (z - z^{-1})}\\
\gam_k(t) &= \gam_k \E^{t (\zeta_k - \zeta_k^{-1})},
\end{align}
where $R(z)=R(z,0)$ and $\gam_k=\gam_k(0)$.
\end{lemma}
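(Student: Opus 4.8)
The plan is to exploit the isospectral (Lax pair) structure of the flow: equation \eqref{tl} is equivalent to $\dot H = [P,H]$ with the skew-adjoint difference operator $P = a S^+ - a^- S^-$, so that $H(t)$ is unitarily equivalent for all $t$ and, in particular, $T(z)$ is time independent. The first task is to determine how the Jost solutions $\psi_\pm(z,n,t)$ evolve. Differentiating $H\psi_\pm = \lam \psi_\pm$ (with $\lam=(z+z^{-1})/2$ fixed) in $t$ and inserting $\dot H = [P,H]$ gives
\[
H(\dot\psi_\pm - P\psi_\pm) = \lam(\dot\psi_\pm - P\psi_\pm),
\]
so $\dot\psi_\pm - P\psi_\pm$ again solves the eigenvalue equation at the same $\lam$.

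To identify this solution I would compare asymptotics. As $n\to+\infty$ one has $a(n,t)\to\frac12$, $b(n,t)\to 0$ and $z^{-n}\psi_+\to 1$ (hence $z^{-n}\dot\psi_+\to 0$), while $z^{-n}P\psi_+ \to \frac12(z - z^{-1})$. Thus $z^{-n}(\dot\psi_+ - P\psi_+) \to -\frac12(z-z^{-1})$, and since a solution whose $z^{-n}$-normalized limit exists must be a multiple of $\psi_+$, this forces
\[
\dot\psi_+ = P\psi_+ - \tfrac12(z - z^{-1})\psi_+ .
\]
The same computation as $n\to-\infty$ yields $\dot\psi_- = P\psi_- + \frac12(z-z^{-1})\psi_-$. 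On $\T$, conjugating the first identity and using $\ol{z-z^{-1}}=-(z-z^{-1})$ together with the reality of the coefficients of $P$ gives $\dot{\ol{\psi_+}} = P\ol{\psi_+} + \frac12(z-z^{-1})\ol{\psi_+}$.

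With these three evolution laws both claims follow by direct differentiation. For the reflection coefficient I would differentiate the scattering relation $T(z)\psi_- = \ol{\psi_+} + R_+\psi_+$ in $t$, use $\dot T=0$, substitute the evolution identities, and commute $P$ past the $n$-independent scalars via $TP\psi_- = P(T\psi_-) = P(\ol{\psi_+}+R_+\psi_+)$. After cancelling the $P\ol{\psi_+}$, $R_+P\psi_+$ and the two $\frac12(z-z^{-1})\ol{\psi_+}$ terms, the coefficient of $\psi_+$ collapses to $\dot R_+ = (z-z^{-1})R_+$, i.e. $R(z,t)=R(z)\E^{t(z-z^{-1})}$. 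For the norming constants I would differentiate $\gam_{+,k}(t)^{-1}=\sum_n \psi_+(\zeta_k,n,t)^2$, using that at the real point $z=\zeta_k$ the eigenfunction is real and lies in $\lz$; this gives $\frac{d}{dt}\gam_{+,k}^{-1} = 2\spr{\psi_+}{P\psi_+} - (\zeta_k-\zeta_k^{-1})\gam_{+,k}^{-1}$, and skew-adjointness of $P$ kills the first term, leaving $\frac{d}{dt}\gam_{+,k}^{-1} = -(\zeta_k-\zeta_k^{-1})\gam_{+,k}^{-1}$ and hence $\gam_k(t) = \gam_k\E^{t(\zeta_k-\zeta_k^{-1})}$.

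The step I expect to be most delicate is fixing the scalar term $\mp\frac12(z-z^{-1})$ in the Jost evolution: it depends on the precise normalization of $P$ and on the argument that $\dot\psi_\pm - P\psi_\pm$ is a pure multiple of $\psi_\pm$ rather than a genuine combination of both Jost solutions. The latter is immediate for $0<|z|<1$, where the competing solution grows like $z^{-n}$, but on the unit circle it requires the standard observation that the normalized limit $z^{\mp n}(\dot\psi_\pm - P\psi_\pm)$ exists. Everything else is bookkeeping, provided one keeps track that $P$ acts only on the lattice variable $n$ and therefore commutes with the $n$-independent scattering quantities $T$ and $R_\pm$.
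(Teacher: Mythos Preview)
The paper does not supply a proof of this lemma; it is quoted as a known fact from the inverse scattering literature (see the references \cite{tist}, \cite{tivp}, \cite{tjac} cited at the beginning of Section~\ref{sec:istrhp}). Your argument is the standard Lax-pair derivation found there and is correct. One small comment on the point you yourself flag: the identity $\dot\psi_+ = P\psi_+ - \tfrac12(z-z^{-1})\psi_+$ is first established for $0<|z|<1$, where the competing solution blows up and the identification is unambiguous; it then extends to $|z|=1$ by continuity, since the paper records that $\psi_\pm$ have smooth boundary values on $\T$. So no separate argument on the unit circle is needed.
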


Now we define the sectionally meromorphic vector
\be\label{defm}
m(z,n,t)= \left\{\begin{array}{c@{\quad}l}
\begin{pmatrix} T(z) \psi_-(z,n,t) z^n  & \psi_+(z,n,t) z^{-n} \end{pmatrix},
& |z|<1,\\
\begin{pmatrix} \psi_+(z^{-1},n,t) z^n & T(z^{-1}) \psi_-(z^{-1},n,t) z^{-n} \end{pmatrix},
& |z|>1.
\end{array}\right.
\ee
We are interested in the jump condition of $m(z,n,t)$ on the unit circle $\T$ (oriented
counterclockwise). To formulate our jump condition we use the following convention:
When representing functions on $\T$, the lower subscript denotes
the non-tangential limit from different sides,
\be
m_\pm(z) = \lim_{ \zeta\to z,\; |\zeta|^{\pm 1}<1} m(\zeta), \qquad |z|=1.
\ee
In general, for an oriented contour $\Sigma$, $m_+(z)$ (resp.\ $m_-(z)$) will denote the limit
of $m(\zeta)$ as $\zeta\to z$ from the positive (resp.\ negative) side of $\Sigma$.
Using the notation above implicitly assumes that these limits exist in the sense that
$m(z)$ extends to a continuous function on the boundary.

\begin{theorem}[Vector Riemann--Hilbert problem]\label{thm:vecrhp}
Let $\mathcal{S}_+(H(0))=\{ R(z),\; |z|=1; \: (\zeta_k, \gam_k), \: 1\le k \le N \}$
the right scattering data of the operator $H(0)$. Then $m(z)=m(z,n,t)$ defined in \eqref{defm}
is meromorphic away from the unit circle with simple poles at $\zeta_k$, $\zeta_k^{-1}$ and satisfies:
\begin{enumerate}
\item The jump condition
\be \label{eq:jumpcond}
m_+(z)=m_-(z) v(z), \qquad
v(z)=\begin{pmatrix}
1-|R(z)|^2 & - \ol{R(z)} \E^{-t\Phi(z)} \\
R(z) \E^{t\Phi(z)} & 1
\end{pmatrix},
\ee
for $z \in\T$,
\item
the pole conditions
\be\label{eq:polecond}
\aligned
\res_{\zeta_k} m(z) &= \lim_{z\to\zeta_k} m(z)
\begin{pmatrix} 0 & 0\\ - \zeta_k \gam_k \E^{t\Phi(\zeta_k)}  & 0 \end{pmatrix},\\
\res_{\zeta_k^{-1}} m(z) &= \lim_{z\to\zeta_k^{-1}} m(z)
\begin{pmatrix} 0 & \zeta_k^{-1} \gam_k \E^{t\Phi(\zeta_k)} \\ 0 & 0 \end{pmatrix},
\endaligned
\ee
\item
the symmetry condition
\be \label{eq:symcond}
m(z^{-1}) = m(z) \sigI
\ee
\item
and the normalization
\be\label{eq:normcond}
m(0) = (m_1\quad m_2),\quad m_1 \cdot m_2 = 1\quad m_1 > 0.
\ee
\end{enumerate}
Here the phase is given by
\begin{equation}
\Phi(z)=z-z^{-1}+2 \frac{n}{t} \log \, z.
\end{equation}
\end{theorem}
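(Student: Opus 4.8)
The plan is to verify each of the four properties of Theorem~\ref{thm:vecrhp} directly from the definition \eqref{defm} of $m(z,n,t)$, using the scattering relations \eqref{relscat}, the residue formula \eqref{eq:resT}, and the asymptotics \eqref{eq:psiasym}. The jump and pole conditions will come from the continuous and discrete spectral data respectively, while the symmetry and normalization encode the structure $z \mapsto z^{-1}$ of the Joukowski map together with the small-$z$ behavior of the Jost functions.

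First I would establish the \emph{jump condition} (i). On the unit circle, both pieces of \eqref{defm} are relevant: the $|z|<1$ piece gives $m_+(z)$ and the $|z|>1$ piece gives $m_-(z)$, with the roles of $z$ and $z^{-1}$ exchanged. I would substitute the scattering relation \eqref{relscat}, $T(z)\psi_\mp = \overline{\psi_\pm} + R_\pm \psi_\pm$, to re-express the $|z|<1$ limit in terms of $\psi_+$ and its conjugate, and then use the identities \eqref{reltrpm} relating $R_\pm$, $T$ and their conjugates to rewrite everything through the single coefficient $R=R_+$. Collecting the factors $z^{\pm n}$ and the time-dependence $\E^{t(z-z^{-1})}$ from $R(z,t)=R(z)\E^{t(z-z^{-1})}$, together with the $2\frac{n}{t}\log z$ term that assembles into $\E^{t\Phi(z)}$, should reproduce exactly the matrix $v(z)$. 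This is a bookkeeping computation but I expect it to be the \textbf{main obstacle}: one has to track four scalar entries, match the conjugation conventions on $\T$ (where $\overline{z}=z^{-1}$), and confirm that the phase $\Phi$ emerges with the correct sign in each off-diagonal entry.

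Next I would treat the \emph{pole conditions} (ii). Since $T(z)$ has simple poles at the $\zeta_k$ with residues given by \eqref{eq:resT}, only the first entry of $m$ (for $|z|<1$) is singular at $\zeta_k$, and $\res_{\zeta_k} m = \big(\res_{\zeta_k} T(z)\,\psi_-(\zeta_k,n,t)\,\zeta_k^n,\; 0\big)$. Using $\res_{\zeta_k}T = -\zeta_k \gam_{-,k}\mu_k$ and $\psi_-=\mu_k\psi_+$ to rewrite the residue in terms of the surviving second component $\psi_+(\zeta_k,n,t)\zeta_k^{-n}$, and inserting the time factor through $\gam_k(t)=\gam_k\E^{t(\zeta_k-\zeta_k^{-1})}$ (which combines with $\zeta_k^{2n}$ into $\E^{t\Phi(\zeta_k)}$), should give precisely the stated matrix relation. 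The pole at $\zeta_k^{-1}$ then follows from the symmetry \eqref{eq:symcond}, so I would prove that symmetry before closing this step.

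For the \emph{symmetry} (iii), observe that replacing $z$ by $z^{-1}$ in \eqref{defm} interchanges the two cases $|z|<1$ and $|z|>1$, and that multiplication by $\sigI$ swaps the two vector components; comparing the two expressions entry by entry shows they agree, which is immediate from the definition. Finally, for the \emph{normalization} (iv), I would send $z\to 0$ in the $|z|<1$ branch, where $m(z)=\big(T(z)\psi_-(z,n,t)z^n,\;\psi_+(z,n,t)z^{-n}\big)$. Using the asymptotics \eqref{eq:psiasym} for $\psi_\pm$, together with the known small-$z$ behavior of $T(z)$ and the definitions \eqref{defAB} of $A_\pm$, the leading terms combine to give a vector $(m_1,m_2)$ with $m_1 = A_+(n,t)/\text{(limit of $T\cdot$ stuff)}$; the product $m_1 m_2=1$ follows from the Wronskian/transmission normalization, and positivity $m_1>0$ follows from $a(j,t)>0$, which makes the product $A_+(n,t)=\prod_{j=n}^\infty 2a(j,t)$ positive.
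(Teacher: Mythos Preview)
Your proposal is correct and follows essentially the same route as the paper's own proof: the jump condition from the scattering relations \eqref{relscat} together with \eqref{reltrpm}, the pole conditions from the residue formula \eqref{eq:resT} (with the condition at $\zeta_k^{-1}$ recovered via symmetry, exactly as the paper remarks), the symmetry by construction, and the normalization from the small-$z$ asymptotics \eqref{eq:psiasym} (which the paper packages into the lemma immediately following the theorem). Your write-up simply supplies more of the bookkeeping that the paper leaves implicit.
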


\begin{proof}
The jump condition \eqref{eq:jumpcond} is a simple calculation using the scattering relations
\eqref{relscat} plus \eqref{reltrpm}. The pole conditions follow since $T(z)$ is meromorphic in $|z| <1$
with simple poles at $\zeta_k$ and residues given by \eqref{eq:resT}.
The symmetry condition holds by construction and the normalization \eqref{eq:normcond}
is immediate from the following lemma.
\end{proof}

Observe that the pole condition at $\zeta_k$ is sufficient since the one at $\zeta_k^{-1}$ follows
by symmetry.

Moreover, we have the following asymptotic behaviour near $z=0$:

\begin{lemma}
The function $m(z,n,t)$ defined in \eqref{defm} satisfies
\be\label{eq:AB}
m(z,n,t) = \begin{pmatrix}
A(n,t) (1 - 2 B(n-1,t) z) &
\frac{1}{A(n,t)}(1 + 2 B(n,t) z )
\end{pmatrix} + O(z^2).
\ee
Here $A(n,t)= A_+(n,t)$ and $B(n,t)= B_+(n,t)$ are defined in \eqref{defAB}.
\end{lemma}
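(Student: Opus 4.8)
The plan is to read off both entries of $m(z,n,t)$ for $|z|<1$ directly from its definition \eqref{defm} together with the Jost asymptotics \eqref{eq:psiasym}, the only nontrivial ingredient being the behaviour of the transmission coefficient $T(z)$ as $z\to0$. For the second entry this is immediate: multiplying the expansion $\psi_+(z,n,t)=\frac{z^{n}}{A_+(n,t)}(1+2B_+(n,t)z+O(z^2))$ from \eqref{eq:psiasym} by $z^{-n}$ gives at once
\be
\psi_+(z,n,t)\,z^{-n}=\frac{1}{A_+(n,t)}\bigl(1+2B_+(n,t)z\bigr)+O(z^2),
\ee
which is the claimed second component with $A=A_+$, $B=B_+$.

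For the first entry $T(z)\psi_-(z,n,t)z^n$ I would first record, again from \eqref{eq:psiasym}, that $\psi_-(z,n,t)z^n=\frac{1}{A_-(n,t)}(1+2B_-(n,t)z)+O(z^2)$, so that the whole problem reduces to expanding $T(z)=T(0)+T'(0)z+O(z^2)$. To obtain this I would use the representation of $T(z)$ through the Wronskian $W(z)=a(n,t)\bigl(\psi_-(z,n,t)\psi_+(z,n+1,t)-\psi_-(z,n+1,t)\psi_+(z,n,t)\bigr)$ of the two Jost solutions, which is constant in $n$; in the present normalization $T(z)=(z^{-1}-z)/(-2W(z))$ (see \cite{tjac}), the constant being fixable by the known positivity of $T(0)$. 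Inserting \eqref{eq:psiasym} and using $A_\pm(n+1,t)=A_\pm(n,t)(2a(n,t))^{\mp1}$, one finds, with $P=\prod_{j\in\Z}2a(j,t)=A_+(n,t)A_-(n,t)$,
\be
W(z)=-\frac{1}{2P}z^{-1}\bigl(1+2(B_-(n+1,t)+B_+(n,t))z+O(z^2)\bigr),
\ee
whence $T(z)=P\bigl(1-2(B_-(n+1,t)+B_+(n,t))z+O(z^2)\bigr)$; that is, $T(0)=P$ and $T'(0)/T(0)=2\sum_{j\in\Z}b(j,t)$.

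Combining this with the expansion of $\psi_-(z,n,t)z^n$ and using $P/A_-(n,t)=A_+(n,t)$ gives the first component $A_+(n,t)\bigl(1+(T'(0)/T(0)+2B_-(n,t))z\bigr)+O(z^2)$, and since $T'(0)/T(0)+2B_-(n,t)=2\sum_{j\ge n}b(j,t)=-2B_+(n-1,t)$ this is exactly $A_+(n,t)(1-2B_+(n-1,t)z)+O(z^2)$, completing \eqref{eq:AB}. The main obstacle is the middle paragraph: pinning down the Wronskian representation of $T$ with the correct constant and carrying the expansion of $W$ to first order in $z$ without arithmetic slips. A useful internal check is that $T(0)$ and $T'(0)/T(0)$ must come out independent of $n$, which forces precisely the telescoping of the $A_\pm$ and the $b$-sums exhibited above and, in turn, makes the normalization \eqref{eq:normcond} drop out.
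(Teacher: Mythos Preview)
Your proof is correct and follows the same route as the paper's one-line argument, which simply invokes \eqref{eq:psiasym} together with the expansion $T(z)=A_+A_-\bigl(1-2(B_+ - b + B_-)z+O(z^2)\bigr)$; you additionally derive this $T$-expansion from the Wronskian representation rather than quoting it, which is a useful self-contained check but not a different idea.
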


\begin{proof}
This follows from \eqref{eq:psiasym} and $T(z)= A_+ A_- ( 1 - 2(B_+ - b +B_-)z+ O(z^2))$.
\end{proof}

For our further analysis it will be convenient to rewrite the pole condition as a jump
condition and hence turn our meromorphic Riemann--Hilbert problem into a holomorphic Riemann--Hilbert problem following \cite{dkkz}.
Choose $\eps$ so small that the discs $|z-\zeta_k|<\eps$ are inside the unit circle and
do not intersect. Then redefine $m$ in a neighborhood of $\zeta_k$ respectively $\zeta_k^{-1}$ according to
\be\label{eq:redefm}
m(z) = \begin{cases} m(z) \begin{pmatrix} 1 & 0 \\
\frac{\zeta_k \gamma_k \E^{t\Phi(\zeta_k)} }{z-\zeta_k} & 1 \end{pmatrix},  &
|z-\zeta_k|< \eps,\\
m(z) \begin{pmatrix} 1 & -\frac{z \gamma_k \E^{t\Phi(\zeta_k)} }{z-\zeta_k^{-1}} \\
0 & 1 \end{pmatrix},  &
|z^{-1}-\zeta_k|< \eps,\\
m(z), & \text{else}.\end{cases}
\ee
Then a straightforward calculation using $\res_\zeta m = \lim_{z\to\zeta} (z-\zeta)m(z)$ shows

\begin{lemma}\label{lem:pctoj}
Suppose $m(z)$ is redefined as in \eqref{eq:redefm}. Then $m(z)$ is holomorphic away from
the unit circle and satisfies \eqref{eq:jumpcond}, \eqref{eq:symcond}, \eqref{eq:normcond}
and the pole conditions are replaced by the jump conditions
\be \label{eq:jumpcond2}
\aligned
m_+(z) &= m_-(z) \begin{pmatrix} 1 & 0 \\
\frac{\zeta_k \gamma_k \E^{t\Phi(\zeta_k)}}{z-\zeta_k} & 1 \end{pmatrix},\quad |z-\zeta_k|=\eps,\\
m_+(z) &= m_-(z) \begin{pmatrix} 1 & \frac{z \gamma_k \E^{t\Phi(\zeta_k)}}{z-\zeta_k^{-1}} \\
0 & 1 \end{pmatrix},\quad |z^{-1}-\zeta_k|=\eps,
\endaligned
\ee
where the small circle around $\zeta_k$ is oriented counterclockwise and the one around
$\zeta_k^{-1}$ is oriented clockwise.
\end{lemma}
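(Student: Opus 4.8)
The plan is to verify the four structural conditions one at a time, the decisive point being that the triangular factor inserted in \eqref{eq:redefm} carries a simple pole at $\zeta_k$ (resp.\ $\zeta_k^{-1}$) tuned precisely to annihilate the pole of $m$. Write $m(z)=(m_1(z)\ m_2(z))$. Reading the pole condition \eqref{eq:polecond} componentwise, its right-hand side at $\zeta_k$ is $(-\zeta_k\gam_k\E^{t\Phi(\zeta_k)}\,m_2(\zeta_k)\ \ 0)$, so $m_1$ has a simple pole at $\zeta_k$ with residue $-\zeta_k\gam_k\E^{t\Phi(\zeta_k)}\,m_2(\zeta_k)$, while the vanishing second entry forces $m_2$ to be holomorphic there (a simple pole with zero residue is no pole). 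By the symmetry \eqref{eq:symcond} the roles are interchanged at $\zeta_k^{-1}$: there $m_2$ carries the pole and $m_1$ is regular.

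First I would establish holomorphicity. For $|z-\zeta_k|<\eps$ the redefinition replaces the first component by $m_1(z)+\frac{\zeta_k\gam_k\E^{t\Phi(\zeta_k)}}{z-\zeta_k}\,m_2(z)$ and leaves $m_2$ untouched. Since $m_2$ is continuous at $\zeta_k$, the inserted term has residue $\zeta_k\gam_k\E^{t\Phi(\zeta_k)}\,m_2(\zeta_k)$, exactly the negative of the residue of $m_1$; the two cancel and the redefined vector extends holomorphically across $\zeta_k$. The computation at $\zeta_k^{-1}$ is identical after the swap (the upper-triangular factor now places a pole in the regular first component to kill that of $m_2$), or it may simply be deduced from the symmetry checked below.

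Next comes the bookkeeping. Because $\eps$ was chosen so that the discs about the $\zeta_k$ lie strictly inside $\T$ and, by symmetry, those about the $\zeta_k^{-1}$ strictly outside, none of them meets either the unit circle or the point $z=0$; hence the jump \eqref{eq:jumpcond} on $\T$ and the normalization \eqref{eq:normcond} are inherited verbatim. Across $|z-\zeta_k|=\eps$ the redefined $m$ equals the old $m$ times the lower-triangular factor on the interior (the $+$ side for counterclockwise orientation, by the convention fixed before \eqref{eq:jumpcond}) and the old $m$ on the exterior, giving exactly the first line of \eqref{eq:jumpcond2}. Across $|z^{-1}-\zeta_k|=\eps$ the clockwise orientation puts the redefined (interior) value on the $-$ side, so the jump matrix is the inverse of the inserted upper-triangular factor; since that inverse merely flips the sign of the off-diagonal entry, one recovers the second line of \eqref{eq:jumpcond2}.

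The one genuinely nonroutine step, and the reason the two factors in \eqref{eq:redefm} are not identical, is the persistence of the symmetry \eqref{eq:symcond}. Denoting by $M_k(z)$ and $N_k(z)$ the lower- and upper-triangular factors used near $\zeta_k$ and $\zeta_k^{-1}$, invariance of $m(z^{-1})=m(z)\sigI$ under the redefinition reduces (since $\sigI^2$ is the identity and $m(z)$ ranges over enough directions) to the matrix identity $\sigI\,M_k(z)\,\sigI=N_k(z^{-1})$. A short calculation, using $\frac{z^{-1}}{z^{-1}-\zeta_k^{-1}}=\frac{\zeta_k}{\zeta_k-z}$ to rewrite the off-diagonal entry of $N_k(z^{-1})$, confirms this. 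I expect this compatibility to be the only place demanding real care; everything else is the straightforward residue computation the statement alludes to via $\res_\zeta m=\lim_{z\to\zeta}(z-\zeta)m(z)$.
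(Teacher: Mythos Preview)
Your argument is correct and is precisely the ``straightforward calculation using $\res_\zeta m=\lim_{z\to\zeta}(z-\zeta)m(z)$'' that the paper invokes without spelling out; the paper gives no further detail, so there is nothing to compare beyond noting that you have faithfully unpacked the residue cancellation, the $\pm$-side bookkeeping under the stated orientations, and the symmetry identity $\sigI M_k(z)\sigI=N_k(z^{-1})$. One phrasing quibble: near $\zeta_k^{-1}$ the upper-triangular factor does not ``place a pole in the regular first component'' but rather uses the regular $m_1$ to manufacture a pole term that is added to $m_2$; your alternative of deducing this case from the symmetry is cleaner and avoids the slip.
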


Next we turn to uniqueness of the solution of this vector Riemann--Hilbert problem. This will also explain the
reason for our symmetry condition. We begin by observing that if
there is a point $z_1\in\C$, such that $m(z_1)=\rN$, then $n(z)=\frac{1}{z-z_1} m(z)$
satisfies the same jump and pole conditions as $m(z)$. However, it will clearly
violate the symmetry condition! Hence, without the symmetry condition, the solution
of our vector Riemann--Hilbert problem will not be unique in such a situation. Moreover, a look at the
one soliton solution verifies that this case indeed can happen.

\begin{lemma}[One soliton solution]\label{lem:singlesoliton}
Suppose there is only one eigenvalue and a vanishing reflection coefficient, that is,
$\mathcal{S}_+(H(t))=\{ R(z)\equiv 0,\; |z|=1; \: (\zeta, \gam) \}$ with $\zeta\in(-1,0)\cup(0,1)$ and $\gam\ge0$.
Then the Riemann--Hilbert problem \eqref{eq:jumpcond}--\eqref{eq:normcond} has a unique solution
is given by
\begin{align}\label{eq:oss}
m_0(z) &= \begin{pmatrix} f(z) & f(1/z) \end{pmatrix} \\
\nn f(z) &= \frac{1}{\sqrt{1 - \zeta^2 + \gam(n,t)} \sqrt{1 - \zeta^2 + \zeta^2 \gam(n,t)}}
\left(\gam(n,t) \zeta^2 \frac{z-\zeta^{-1}}{z - \zeta} + 1 - \zeta^2\right),
\end{align}
where $\gam(n,t)=\gam \E^{t\Phi(\zeta)}$.
In particular,
\be
A_+(n,t) = \sqrt{\frac{1-\zeta^2 + \gam(n,t)}{1 - \zeta^2 + \gam(n,t) \zeta^2}}, \qquad
B_+(n,t) = \frac{\gam(n,t)  \zeta (\zeta ^2-1)}{2 (1 - \zeta^2 + \gam(n,t) \zeta^2)}.
\ee
\end{lemma}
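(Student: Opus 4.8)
The plan is to exploit the fact that a vanishing reflection coefficient trivializes the jump on the unit circle, so that the whole problem collapses to rational-function bookkeeping controlled by the symmetry. First I would observe that with $R(z)\equiv 0$ the jump matrix in \eqref{eq:jumpcond} is the identity, so $m(z)$ has no jump across $\T$ and therefore extends to a single meromorphic (in fact rational) row vector on the Riemann sphere whose only possible singularities are the simple poles at $\zeta$ and $\zeta^{-1}$ dictated by \eqref{eq:polecond}.

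Next I would use the symmetry \eqref{eq:symcond}, $m(z^{-1})=m(z)\sigI$, to halve the unknowns: writing $m=(m^{(1)},m^{(2)})$ it gives $m^{(2)}(z)=m^{(1)}(1/z)$, so the entire solution is encoded in the single scalar $f:=m^{(1)}$, and the ansatz $m=(f(z),f(1/z))$ in \eqref{eq:oss} is \emph{forced}, not guessed. Reading off \eqref{eq:polecond} componentwise shows that $m^{(2)}$ is regular at $\zeta$ while $m^{(1)}=f$ carries the simple pole there; the pole of $m^{(2)}$ at $\zeta^{-1}$ is then automatically the image under $z\mapsto 1/z$ of the pole of $f$ at $\zeta$. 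Since $f$ is regular at $\zeta^{-1}$ and finite at $\infty$ (because $f(\infty)=m^{(2)}(0)$ is finite by \eqref{eq:normcond}), $f$ is a rational function with exactly one simple pole, hence $f(z)=A+B/(z-\zeta)$ for two constants $A,B$.

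Then I would pin down $A$ and $B$ from the remaining data. The residue condition at $\zeta$ in \eqref{eq:polecond}, namely $\res_\zeta m^{(1)}=-\zeta\gam(n,t)\,m^{(2)}(\zeta)$ with $\gam(n,t)=\gam\E^{t\Phi(\zeta)}$, gives one linear relation between $A$ and $B$; the residue condition at $\zeta^{-1}$ produces the \emph{same} relation (this is exactly the redundancy noted after Theorem~\ref{thm:vecrhp}), so no information is lost. The normalization \eqref{eq:normcond}, $m_1 m_2=1$ with $m_1>0$, where $m_1=f(0)=A-B/\zeta$ and $m_2=f(\infty)=A$, fixes $A^2$ and selects the positive root, determining $(A,B)$ uniquely. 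Substituting back and simplifying the two Blaschke-type factors yields precisely \eqref{eq:oss}; in particular uniqueness of the Riemann--Hilbert solution drops out of the argument, since every step was an equivalence.

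Finally, the formulas for $A_+(n,t)$ and $B_+(n,t)$ follow by matching the expansion of $m_0(z)$ near $z=0$ against \eqref{eq:AB}: the constant term of $m^{(1)}$ gives $A_+$, and the coefficient of $z$ in $m^{(2)}(z)=f(1/z)$ gives $B_+$ after multiplying by $A_+$. I expect no genuine obstacle here, since the computation is elementary; the only points demanding care are the consistency (rather than overdetermination) of the two residue conditions and the correct sign in the square-root normalization. The symmetry condition is precisely what guarantees both, and what rules out the spurious non-unique solutions of the form $\frac{1}{z-z_1}m(z)$ discussed before the lemma.
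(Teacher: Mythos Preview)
Your proposal is correct and follows essentially the same route as the paper's own proof: use symmetry to reduce to a single scalar $f(z)$ meromorphic on the sphere with a unique simple pole at $\zeta$, parametrize $f$ by two constants, and determine them from the residue condition at $\zeta$ together with the normalization $f(0)f(\infty)=1$, $f(0)>0$. The paper's argument is terser but identical in substance, including the observation that the pole condition at $\zeta^{-1}$ is redundant by symmetry.
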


\begin{proof}
By symmetry, the solution must be of the form $m_0(z) = \begin{pmatrix} f(z) & f(1/z) \end{pmatrix}$,
where $f(z)$ is meromorphic in $\C\cup\{\infty\}$ with the only possible pole at $\zeta$. Hence
\[
f(z) = \frac{1}{A} \left( 1+ 2 \frac{B}{z - \zeta}\right),
\]
where the unknown constants $A$ and $B$ are uniquely determined by the pole condition
$\res_\zeta f(z) = -\zeta \gam(n,t) f(\zeta^{-1})$ and the normalization $f(0) f(\infty)=1$, $f(0)>0$.
\end{proof}

In fact, observe $f(z_1)=f(z_1^{-1})=0$ if and only if $z_1=\pm 1$ and $\gam=\pm(\zeta^{-1}-\zeta)$.
Furthermore, even in the general case $m(z_1)=\rN$ can only occur at $z_1=\pm1$ as the
following lemma shows.

\begin{lemma}\label{lem:resonant}
If $m(z_1) = \rN$ for $m$ defined as in \eqref{defm}, then $z_1  = \pm 1$. Moreover,
the zero of at least one component is simple in this case.
\end{lemma}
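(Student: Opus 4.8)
The plan is to exploit the two-dimensional solution space of the underlying difference equation together with the symmetry \eqref{eq:symcond}, the key tool being the modified Wronskian $W(f,g)=a(n)\big(f(n)g(n+1)-f(n+1)g(n)\big)$, which is independent of $n$ whenever $f,g$ solve $H\psi=\frac{z+z^{-1}}{2}\psi$. First I would make three reductions. Since $m(0)=\big(A_+(n,t),A_+(n,t)^{-1}\big)+O(z)$ with $A_+(n,t)>0$ by \eqref{eq:AB} and \eqref{defAB}, the point $z_1=0$ is excluded. Since \eqref{eq:symcond} gives $m(z_1^{-1})=m(z_1)\ssigI=\rN$ whenever $m(z_1)=\rN$, I may assume $|z_1|\le 1$. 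Finally, as $m$ has poles (not zeros) at the $\zeta_k$, the hypothesis $m(z_1)=\rN$ forces $z_1\neq\zeta_k$, so that $T(z_1)$ is finite. It then remains to treat $0<|z_1|<1$ and $|z_1|=1$ separately, and the mechanism is genuinely different on the two sets.

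For $0<|z_1|<1$ I read off from \eqref{defm} that $m_2(z_1)=\psi_+(z_1,n)z_1^{-n}=0$ and $m_1(z_1)=T(z_1)\psi_-(z_1,n)z_1^n=0$. The first forces $\psi_+(z_1,n)=0$. For the second I use the standard identity $W(\psi_-(z),\psi_+(z))=\frac{z-z^{-1}}{2}\,T(z)^{-1}$: since $z-z^{-1}\neq0$ in the interior while $W$ stays finite, $T$ has no zeros there, so $T(z_1)\neq0$ and hence $\psi_-(z_1,n)=0$ as well. But then $W(\psi_-(z_1),\psi_+(z_1))=a(n)\big(\psi_-(z_1,n)\psi_+(z_1,n+1)-\psi_-(z_1,n+1)\psi_+(z_1,n)\big)=0$, which by the same identity forces $T(z_1)=\infty$, i.e.\ $z_1=\zeta_k$, contradicting $z_1\neq\zeta_k$. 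Thus the vector has no interior zero.

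For $|z_1|=1$ with $z_1\neq\pm1$ I again use $m_2(z_1)=\psi_+(z_1,n)z_1^{-n}=0$, so $\psi_+(z_1,n)=0$ (the first component gives no extra information here). On the unit circle the coefficients $a,b$ are real and $\frac{z+z^{-1}}{2}=\re(z)$ is real, so $\ol{\psi_+(z_1,\cdot)}$ solves the same equation as $\psi_+(z_1,\cdot)$; matching leading asymptotics $\psi_+(z,n)\sim z^n$, $\ol{\psi_+(z,n)}\sim z^{-n}$ as $n\to+\infty$ yields $W(\psi_+(z_1),\ol{\psi_+(z_1)})=\frac{z_1^{-1}-z_1}{2}$. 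Since $\psi_+(z_1,n)=0$ also gives $\ol{\psi_+(z_1,n)}=0$, this Wronskian vanishes, whence $z_1^{-1}=z_1$, i.e.\ $z_1=\pm1$. Combining the two cases proves $z_1=\pm1$.

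For the simplicity statement, fix $z_1=\pm1$; then $\psi_+(z_1,n)=0$ and $\psi_+(z_1,\cdot)$ is a nontrivial solution of $(H\mp1)\psi=0$. Differentiating $H\psi_+(z)=\frac{z+z^{-1}}{2}\psi_+(z)$ in $z$ and using $\frac{1-z^{-2}}{2}\big|_{z=\pm1}=0$ shows that $\partial_z\psi_+(\pm1,\cdot)$ solves the \emph{same} equation $(H\mp1)\psi=0$. These two solutions are linearly independent, since their leading asymptotics at $+\infty$ are $(\pm1)^n$ and $n(\pm1)^{n-1}$ (equivalently, their Wronskian equals the nonzero constant $\tfrac12$); hence they cannot share the zero at the site $n$, and therefore $\partial_z\psi_+(\pm1,n)\neq0$. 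Consequently $m_2(z)=z^{-n}\psi_+(z,n)$ has a simple zero at $z_1$, and by \eqref{eq:symcond} (which gives $m_1(z)=m_2(z^{-1})$, hence $m_1'(\pm1)=-m_2'(\pm1)$) so does $m_1$. I expect the main technical point to be precisely this last independence claim: it relies on the $z$-asymptotics of the Jost solution being differentiable term by term, which follows from the standard Volterra representation of $\psi_+$ but must be invoked with some care at the band edge $z=\pm1$.
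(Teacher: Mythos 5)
Your argument has one genuine gap, and it sits exactly where the paper's proof does its real work: the exclusion of $z_1=\zeta_k$. You dismiss this case by asserting that ``$m$ has poles (not zeros) at the $\zeta_k$,'' but that is not justified and is in general false for the fixed $n$ under consideration. Indeed, $m_1(z)=T(z)\psi_-(z,n)z^n$ has an actual pole at $\zeta_k$ only when its residue is nonzero, and by \eqref{eq:resT} together with $\psi_-(\zeta_k,\cdot)=\mu_k\psi_+(\zeta_k,\cdot)$ this residue is a nonzero multiple of $\psi_+(\zeta_k,n)$. Eigenfunctions of Jacobi operators can vanish at individual lattice sites, and for such a site $n$ one has simultaneously $m_2(\zeta_k)=\psi_+(\zeta_k,n)\zeta_k^{-n}=0$ and $m_1$ \emph{regular} at $\zeta_k$ (simple pole of $T$ times a zero of $\psi_-(\cdot,n)$). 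So precisely in the scenario where $m(\zeta_k)=\rN$ is threatened, your premise fails, and what must be shown is that $m_1(\zeta_k)=\res_{\zeta_k}T\cdot\partial_z\psi_-(\zeta_k,n)\,\zeta_k^n\neq 0$, i.e.\ that $\psi_-(\cdot,n)$ cannot vanish to second order at $\zeta_k$. Your Wronskian identity $W=\frac{z-z^{-1}}{2}T^{-1}$ does not deliver this: it only says $W$ has a simple zero at $\zeta_k$, and a first-order zero of $\psi_+(\cdot,n)$ together with a second-order zero of $\psi_-(\cdot,n)$ is compatible with that, since then $W'(\zeta_k)=-a(n)\psi_-(\zeta_k,n+1)\,\partial_z\psi_+(\zeta_k,n)$, which need not vanish. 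The paper closes this case with an additional input you are missing: the diagonal Green's function $g=\psi_+(\cdot,n)\psi_-(\cdot,n)/W$ is Herglotz and hence has at most simple zeros; combined with the simplicity of the zero of $W$ at $\zeta_k$ this caps the order of vanishing of $\psi_\pm(\cdot,n)$ at one and forces $m_1(\zeta_k)\neq 0$.

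The rest of your proof is essentially sound, and in places goes beyond what the paper writes out. The exclusion of $0<|z_1|<1$, $z_1\neq\zeta_k$, via $W(\psi_-,\psi_+)=\frac{z-z^{-1}}{2}T^{-1}$, and of $|z_1|=1$, $z_1\neq\pm 1$, via $W(\psi_+(z_1),\ol{\psi_+(z_1)})=\frac{z_1^{-1}-z_1}{2}$, are correct; the paper compresses both into the single sentence that degeneracy can only occur at a band edge or an eigenvalue. Your simplicity argument at $z_1=\pm 1$ --- that $\partial_z\psi_+(\pm 1,\cdot)$ solves the same difference equation and is linearly independent of $\psi_+(\pm 1,\cdot)$, so the two cannot share the zero at site $n$ --- is a clean alternative to the paper's computation $W'(z_1)=\frac{c+c^{-1}}{2}$, and it rests on the same technical point (differentiability of the Jost asymptotics at the band edge) that the paper also uses implicitly. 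One minor caveat: the symmetry $m_1(z)=m_2(1/z)$ transfers the simple zero of $m_2$ seen from \emph{inside} the disc to $m_1$ seen from \emph{outside}, not to the same side of the contour; since the lemma only claims simplicity for at least one component, this is harmless, but your formula $m_1'(\pm 1)=-m_2'(\pm 1)$ should be read as relating boundary derivatives from opposite sides.
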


\begin{proof}
By \eqref{defm} the condition $m(z_1) = \rN$ implies that either the Jost solutions $\psi_-(z_1,n)$ and
$\psi_+(z_1,n)$ are linearly dependent or $T(z_1)=0$. This can only happen, at a band edge,
$z_1 = \pm 1$, or at an eigenvalue $z_1=\zeta_j$.

We begin with the case $z_1=\zeta_j$. In this case the derivative of the Wronskian
$W(z)=a(n)(\psi_+(z,n)\psi_-(z,n+1)-\psi_+(z,n+1)\psi_-(z,n))$ does not vanish
$\frac{d}{dz} W(z) |_{z=z_1} \ne 0$ (\cite[Chap.~10]{tjac}). Moreover,
the diagonal Green's function $g(\lam,n)= W(z)^{-1} \psi_+(z,n) \psi_-(z,n)$ is
Herglotz and hence can have at most a simple zero at $z=z_1$. Hence, if
$\psi_+(\zeta_j,n) = \psi_-(\zeta_j,n) =0$, both can have at most a simple zero at $z=\zeta_j$.
But $T(z)$ has a simple pole at $\zeta_j$ and hence $T(z) \psi_-(z,n)$ cannot
vanish at $z=\zeta_j$, a contradiction.

It remains to show that one zero is simple in the case $z_1=\pm 1$. In fact,
one can show that $\frac{d}{dz} W(z) |_{z=z_1} \ne 0$ in this case as follows:
First of all note that $\psi_\pm'(z)$ (where $\prime$ denotes the derivative with respect to
$z$) again solves $H\psi_\pm'(z_1) = \lam_1 \psi_\pm'(z_1)$ if $z_1=\pm1$. Moreover, by
$W(z_1)=0$ we have $\psi_+(z_1) = c \psi_-(z_1)$ for some constant $c$ (independent of $n$).
Thus we can compute
\begin{align*}
W'(z_1) &= W(\psi_+'(z_1),\psi_-(z_1)) + W(\psi_+(z_1),\psi_-'(z_1))\\
&= c^{-1} W(\psi_+'(z_1),\psi_+(z_1)) + c W(\psi_-(z_1),\psi_-'(z_1))
\end{align*}
by letting $n\to+\infty$ for the first and $n\to-\infty$ for the second Wronskian (in which case we can
replace $\psi_\pm(z_1)$ by $z_1^{\pm n}$),
which gives
\[
W'(z_1) = \frac{c+c^{-1}}{2}.
\]
Hence the Wronskian has a simple zero. But if both functions had more than
simple zeros, so would the Wronskian, a contradiction.
\end{proof}

Finally, it is interesting to note that the assumptions $\zeta\in(-1,0)\cup(0,1)$ and $\gam\ge0$ are
crucial for uniqueness. Indeed, if we choose $\gam= \zeta^2-1<0$, then every solution
is a multiple of $f(z)=z\zeta^{-1}(z-\zeta)^{-1}$ which cannot be normalized at $0$.

\section{A uniqueness result for symmetric vector Riemann--Hilbert problems}

In this section we want to investigate uniqueness for the holomorphic vector Riemann--Hilbert problem
\begin{align}\nn
& m_+(z) = m_-(z) v(z), \qquad z\in \Sigma,\\ \label{eq:rhp4m}
& m(z^{-1}) = m(z) \sigI,\\ \nn
& m(0) = \begin{pmatrix} 1 & m_2\end{pmatrix}.
\end{align}
where $\Sigma$ is a nice oriented contour (see Hypothesis~\ref{hyp:sym}), symmetric with respect to
$z\mapsto z^{-1}$, and $v$ is continuous satisfying $\det(v)=1$ and
\be
v(z^{-1}) = \sigI v(z)^{-1} \sigI,\quad z\in\Sigma.
\ee
The normalization used here will be more convenient than \eqref{eq:normcond}.
In fact, \eqref{eq:normcond} will be satisfied by $m_2^{-1/2} m(z)$.

Now we are ready to show that the symmetry condition in fact guarantees uniqueness.

\begin{theorem}
Suppose there exists a solution $m(z)$ of the Riemann--Hilbert problem \eqref{eq:rhp4m} for which
$m(z)=\begin{pmatrix} 0 & 0\end{pmatrix}$ can happen at most for $z=\pm1$ in which case
$\limsup_{z\to\pm 1} \frac{z\mp 1}{m_j(z)}$ is bounded from any direction for $j=1$ or $j=2$.

Then the Riemann--Hilbert problem \eqref{eq:rhp4m} with norming condition replaced by
\be\label{eq:rhp4ma}
m(0) = \begin{pmatrix} \alpha & m_2\end{pmatrix}
\ee
for given $\alpha\in\C$, has a unique solution $m_\alpha(z) = \alpha\, m(z)$.
\end{theorem}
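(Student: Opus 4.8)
The plan is to reduce the theorem to a uniqueness statement for the \emph{homogeneous} problem. The key observation is that the map $m \mapsto \alpha\, m$ sends solutions of \eqref{eq:rhp4m} to solutions of the problem with normalization \eqref{eq:rhp4ma}, because the jump condition $m_+ = m_- v$ and the symmetry condition $m(z^{-1}) = m(z)\sigI$ are both linear and homogeneous, and scaling the first component at $z=0$ by $\alpha$ is exactly what \eqref{eq:rhp4ma} demands. Existence is therefore immediate from the hypothesis. To prove uniqueness, I would suppose $\tilde m$ is any solution of the problem with norming \eqref{eq:rhp4ma} and consider the difference-type combination $m\tilde m - \tilde m m$ built so as to cancel the inhomogeneous normalization; more precisely I would study the scalar quantity formed from $m$ and $\tilde m$ whose jump across $\Sigma$ is trivial.

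\textbf{The central construction.} Given the reference solution $m$ and a competitor $\tilde m$, I would form the scalar function
\[
D(z) = m_1(z)\,\tilde m_2(z) - \tilde m_1(z)\, m_2(z),
\]
which is the $2\times 2$ ``Wronskian-type'' determinant of the two row vectors. The point is that $D$ has \emph{no jump}: since $\tilde m_\pm = \tilde m_\mp v$ and $m_\pm = m_\mp v$ with the same $v$ and $\det v = 1$, the determinant of the matrix with rows $m, \tilde m$ is multiplied by $\det v = 1$ across $\Sigma$, so $D_+ = D_-$ and $D$ extends holomorphically across the contour. Next I would use the symmetry condition: applying $z\mapsto z^{-1}$ interchanges the two components of each vector (via $\sigI$), and a short computation shows $D(z^{-1}) = -D(z)$, so $D$ is an \emph{odd} function under inversion that is holomorphic on $\C\cup\{\infty\}$ away from the possible bad points. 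Since both $m$ and $\tilde m$ are bounded and holomorphic at $0$ and (by symmetry) at $\infty$, $D$ is a bounded entire function on the sphere, hence constant; the oddness $D(z^{-1})=-D(z)$ then forces that constant to vanish, so $D\equiv 0$. This means $m$ and $\tilde m$ are everywhere linearly dependent as row vectors, i.e.\ $\tilde m(z) = \lambda(z)\, m(z)$ for a scalar $\lambda$, which by holomorphy (and triviality of the jump for the ratio) is a constant $\lambda$. Evaluating the second component at $z=0$, where both $m$ and $\tilde m$ have the same value $m_2$, gives $\lambda = 1$ on that component, while the first components give $\alpha = \lambda\cdot\alpha$; comparing with the reference normalization pins down $\tilde m = \alpha\, m$.

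\textbf{The main obstacle} is handling the exceptional points $z=\pm 1$ where a component of $m$ may vanish, since there the argument ``$D$ is holomorphic and bounded'' and the division $\tilde m/m$ must be justified despite a possible zero in a denominator. This is precisely what the hypothesis on $\limsup_{z\to\pm1}\frac{z\mp 1}{m_j(z)}$ is designed to control: it guarantees that any vanishing is at worst a simple zero, so $D$, being a product/difference of components each with at most a simple zero balanced against the other vector's nonvanishing component, has at most a removable singularity at $\pm 1$ rather than a pole. I would therefore argue that near $z=\pm1$ the potential singularity of $D$ is removable by the simple-zero bound, so $D$ is genuinely holomorphic on all of $\C\cup\{\infty\}$, and that the ratio $\lambda(z)=\tilde m_j(z)/m_j(z)$ (taken for the index $j$ whose component does not vanish, or after clearing the common simple zero) extends across $\pm1$.

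\textbf{A technical point} worth isolating is the claim that a function holomorphic on the whole sphere with $D(z^{-1})=-D(z)$ must vanish: one evaluates at a fixed point of the inversion, $z=1$ (or $z=-1$), to get $D(1)=-D(1)$, hence $D(1)=0$, and combined with boundedness and Liouville this gives $D\equiv 0$; the oddness removes the ambiguity of a nonzero constant that the plain Liouville argument would otherwise leave. I expect the write-up to spend most of its effort verifying the symmetry relation $D(z^{-1})=-D(z)$ explicitly from $m(z^{-1})=m(z)\sigI$ and then carefully discharging the $z=\pm1$ removability using the stated $\limsup$ bound.
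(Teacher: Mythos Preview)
Your overall architecture is the same as the paper's: form the $2\times 2$ determinant $D(z)=m_1\tilde m_2-\tilde m_1 m_2$, observe it has no jump (since $\det v=1$), is bounded at $0$ and $\infty$, hence constant by Liouville, and then use the antisymmetry $D(z^{-1})=-D(z)$ to force $D\equiv 0$. From this you correctly conclude $\tilde m(z)=\delta(z)\,m(z)$ for a scalar $\delta$ with no jump. So far this matches the paper.

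The genuine gap is in your treatment of $\delta$ near $z=\pm 1$. You assert that $\delta$ ``extends across $\pm 1$'' either by choosing the nonvanishing component or ``after clearing the common simple zero''. Neither works: at the bad points \emph{both} components of $m$ vanish by hypothesis, and there is no a priori reason for $\tilde m_j$ to vanish there as well, so you cannot clear a common zero. What the simple-zero hypothesis actually buys you is only that $\delta$ has at most a \emph{simple pole} at each of $\pm 1$. The paper then argues as follows: since $\delta$ is holomorphic on $\C\setminus\{\pm 1\}$ (including at $\infty$, by symmetry) with at most simple poles at $\pm 1$, it must have the form
\[
\delta(z)=A+\frac{B}{z-1}+\frac{C}{z+1}.
\]
Now the key point is that $\delta$ is \emph{even} under inversion, $\delta(z^{-1})=\delta(z)$ (not odd---it is the determinant $D$ that is odd, the scalar ratio is even). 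A short computation with this explicit form shows that evenness forces $B=C=0$, so $\delta\equiv A$ is constant. This is exactly where the symmetry hypothesis does its real work, and your sketch misses it.

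Finally, your normalization step is muddled: the second components $m_2$ appearing in \eqref{eq:rhp4m} and \eqref{eq:rhp4ma} are not asserted to be equal---``$m_2$'' is just notation for ``whatever the second component happens to be''. The correct conclusion comes directly from the first component: $\delta=\tilde m_1(0)/m_1(0)=\alpha/1=\alpha$.
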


\begin{proof}
Let $m_\alpha(z)$ be a solution of \eqref{eq:rhp4m} normalized according to
\eqref{eq:rhp4ma}. Then we can construct a matrix valued solution via $M=(m, m_\alpha)$ and
there are two possible cases: Either $\det M(z)$ is nonzero for some $z$ or it vanishes
identically.

We start with the first case. Since the determinant of our Riemann--Hilbert problem has no jump
and is bounded at infinity, it is constant. But taking determinants in
\[
M(z^{-1}) = M(z) \sigI.
\]
gives a contradiction.

It remains to investigate the case where $\det(M)\equiv 0$. In this case
we have $m_\alpha(z) = \delta(z) m(z)$ with a scalar function $\delta$. Moreover,
$\delta(z)$ must be holomorphic for $z\in\C\backslash\Sigma$ and continuous
for $z\in\Sigma$ except possibly at the points where $m(z_0) = \rN$. Since it has
no jump across $\Sigma$,
\[
\delta_+(z) m_+(z) = m_{\alpha,+}(z) = m_{\alpha,-}(z) v(z) = \delta_-(z) m_-(z) v(z)
= \delta_-(z) m_+(z),
\]
it is even holomorphic in $\C\backslash\{\pm 1\}$ with at most
simple poles at $z=\pm 1$. Hence it must be of the form
\[
\delta(z) = A + \frac{B}{z-1} + \frac{C}{z +1}.
\]
Since $\delta$ has to be symmetric, $\delta(z) = \delta(z^{-1})$, we obtain $B = C = 0$. Now, by
the normalization we obtain $\delta(z) = A = \alpha$. This finishes the proof.
\end{proof}

Furthermore, note that the requirements cannot be relaxed to allow (e.g.) second order
zeros in stead of simple zeros. In fact, if $m(z)$ is a solution for which both components
vanish of second order at, say, $z=+1$, then $\ti{m}(z)=\frac{z}{(z-1)^2} m(z)$ is a
nontrivial symmetric solution of the vanishing problem (i.e.\ for $\alpha=0$).

By Lemma~\ref{lem:resonant} we have

\begin{corollary}\label{cor:unique}
The function $m(z,n,t)$ defined in \eqref{defm} is the only solution of the
vector Riemann--Hilbert problem \eqref{eq:jumpcond}--\eqref{eq:normcond}.
\end{corollary}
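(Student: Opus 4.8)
The plan is to deduce the corollary from the preceding theorem, whose hypotheses are supplied by Lemma~\ref{lem:resonant}, after first recasting the meromorphic problem \eqref{eq:jumpcond}--\eqref{eq:normcond} in the holomorphic symmetric form \eqref{eq:rhp4m} treated in this section. The one genuinely substantive input is Lemma~\ref{lem:resonant}: it guarantees that a solution can vanish only at $z=\pm1$, and there only to first order in at least one component, which is exactly the boundedness requirement $\limsup_{z\to\pm1}\frac{z\mp1}{m_j(z)}<\infty$ demanded by the theorem.

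First I would apply Lemma~\ref{lem:pctoj}: redefining $m$ near each $\zeta_k$ and $\zeta_k^{-1}$ via \eqref{eq:redefm} replaces the pole conditions by the jumps \eqref{eq:jumpcond2} on the small circles $|z-\zeta_k|=\eps$ and $|z^{-1}-\zeta_k|=\eps$. The resulting contour $\Sigma=\T\cup\bigcup_k(\{|z-\zeta_k|=\eps\}\cup\{|z^{-1}-\zeta_k|=\eps\})$ is invariant under $z\mapsto z^{-1}$, and the redefined $m$ inherits the symmetry \eqref{eq:symcond}. It then remains to check that the jump matrix obeys $\det v=1$ and $v(z^{-1})=\sigI v(z)^{-1}\sigI$. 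On $\T$ this follows from \eqref{reltrpm} together with $\Phi(z^{-1})=-\Phi(z)$; on the small circles it is a short computation, since writing $z^{-1}-\zeta_k^{-1}=(\zeta_k-z)/(z\zeta_k)$ turns the jump at $\zeta_k^{-1}$ into $\sigI$ times the inverse of the jump at $\zeta_k$, the clockwise versus counterclockwise orientations matching the orientation reversal under $z\mapsto z^{-1}$. Every jump is unipotent or has determinant $(1-|R|^2)+|R|^2=1$, so the determinant condition holds as well.

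Next I would transfer the zero information to the reformulated problem. The matrices in \eqref{eq:redefm} are invertible, and the redefinition is the identity outside the small discs about $\zeta_k$ and $\zeta_k^{-1}$; since $\pm1\in\T$ lie outside those discs, the zero set of the redefined $m$ and its orders of vanishing at $\pm1$ agree with those of the original \eqref{defm}. Hence Lemma~\ref{lem:resonant} applies verbatim to the holomorphic problem and furnishes precisely the hypothesis of the preceding theorem.

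Finally I would invoke that theorem. After the harmless rescaling $m\mapsto m_1^{-1}m$, which makes the first component of $m(0)$ equal to $1$ as in \eqref{eq:rhp4m}, the theorem forces any solution to be a constant multiple $\delta\,m$ of this reference, with $\delta$ fixed by the determinant dichotomy of its proof; the normalization \eqref{eq:normcond}, namely $m_1>0$ and $m_1 m_2=1$, then determines $\delta$ uniquely, and undoing the rescaling identifies the unique solution as \eqref{defm}. I expect the main obstacle to be exactly the bookkeeping of the two previous paragraphs, above all confirming that the simple-zero conclusion of Lemma~\ref{lem:resonant} survives the pole-to-jump reduction, since the uniqueness mechanism collapses the moment a solution is permitted a common second-order zero at $\pm1$, as the remark following the theorem shows.
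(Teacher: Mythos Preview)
Your argument is correct and follows precisely the route the paper intends: the paper's own proof is the single line ``By Lemma~\ref{lem:resonant} we have,'' i.e., feed the zero information of Lemma~\ref{lem:resonant} into the uniqueness theorem immediately preceding the corollary. You have simply spelled out the intermediate bookkeeping (passing through Lemma~\ref{lem:pctoj}, verifying $\det v=1$ and the symmetry $v(z^{-1})=\sigI v(z)^{-1}\sigI$ on each contour component, and matching the normalizations \eqref{eq:rhp4m} and \eqref{eq:normcond}) that the paper leaves implicit.
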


Observe that there is nothing special about the point $z=0$ where we normalize, any
other point would do as well. However, observe that for the one soliton solution \eqref{eq:oss},
$f(z)$ vanishes at
\[
z = \zeta \frac{1+\gam-\zeta^2}{(\gam -1) \zeta ^2+1}
\]
and hence the Riemann--Hilbert problem normalized at this point has a nontrivial solution for $\alpha=0$ and
hence, by our uniqueness result, no solution for $\alpha=1$. This shows that
uniqueness and existence connected, a fact which is not surprising since our
Riemann--Hilbert problem is equivalent to a singular integral equation which is Fredholm of index
zero (see Appendix~\ref{sec:sieq}).

\section{Solitons and the soliton region}
\label{sec:solreg}

This section demonstrates the basic method of passing from a Riemann--Hilbert problem
involving solitons to one without. Furthermore, the asymptotics inside the soliton region
are computed. Solitons are represented in a Riemann--Hilbert problem by pole conditions,
for this reason we will further study how poles can be dealt with in this section.

For easy reference we note the following result which can be checked by a straightforward
calculation.

\begin{lemma}[Conjugation]\label{lem:conjug}
Assume that $\wti{\Sigma}\subseteq\Sigma$. Let $D$ be a matrix of the form
\be
D(z) = \begin{pmatrix} d(z)^{-1} & 0 \\ 0 & d(z) \end{pmatrix},
\ee
where $d: \C\backslash\wti{\Sigma}\to\C$ is a sectionally analytic function. Set
\be
\ti{m}(z) = m(z) D(z),
\ee
then the jump matrix transforms according to
\be
\ti{v}(z) = D_-(z)^{-1} v(z) D_+(z).
\ee
If $d$ satisfies $d(z^{-1}) = d(z)^{-1}$ and $d(0) > 0$. Then the transformation $\ti{m}(z) = m(z) D(z)$
respects our symmetry, that is, $\ti{m}(z)$ satisfies \eqref{eq:symcond} if and only if $m(z)$ does.
\end{lemma}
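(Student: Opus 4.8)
The final statement to prove is Lemma (Conjugation): given the diagonal matrix $D(z) = \operatorname{diag}(d(z)^{-1}, d(z))$ with $d$ sectionally analytic off $\widetilde\Sigma \subseteq \Sigma$, the transformed vector $\widetilde m = m D$ has jump $\widetilde v = D_-^{-1} v D_+$; and if additionally $d(z^{-1}) = d(z)^{-1}$ and $d(0)>0$, then $\widetilde m$ respects the symmetry \eqref{eq:symcond} precisely when $m$ does.

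\bigskip

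The statement explicitly advertises itself as checkable ``by a straightforward calculation,'' so the plan is to verify the two assertions directly rather than invoke any machinery. First I would establish the jump transformation. On the contour $\Sigma$ the boundary values of $\widetilde m$ from either side are $\widetilde m_\pm(z) = m_\pm(z) D_\pm(z)$, since $D$ is analytic off $\widetilde\Sigma$ and $D_\pm$ denotes its boundary values where they differ. Starting from $m_+ = m_- v$ I would write
\[
\widetilde m_+(z) = m_+(z) D_+(z) = m_-(z) v(z) D_+(z) = \widetilde m_-(z) D_-(z)^{-1} v(z) D_+(z),
\]
which by definition identifies $\widetilde v = D_-^{-1} v D_+$. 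On the part of $\Sigma$ outside $\widetilde\Sigma$ we simply have $D_+ = D_-$ so the formula is consistent there as well.

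\bigskip

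For the symmetry claim, I would compute $\widetilde m(z^{-1}) = m(z^{-1}) D(z^{-1})$ and compare with $\widetilde m(z)\sigI = m(z) D(z) \sigI$. Using the hypothesis $d(z^{-1}) = d(z)^{-1}$, the matrix $D(z^{-1}) = \operatorname{diag}(d(z), d(z)^{-1})$ is exactly the conjugate $\sigI\, D(z)\, \sigI$, since swapping the diagonal entries is what conjugation by the exchange matrix $\ssigI$ does. Hence
\[
\widetilde m(z^{-1}) = m(z^{-1}) \sigI D(z) \sigI,
\]
and substituting the symmetry relation \eqref{eq:symcond} for $m$, namely $m(z^{-1}) = m(z)\sigI$, together with $\sigI^2 = \id$, collapses this to $m(z) D(z) \sigI = \widetilde m(z)\sigI$. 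Reading the chain of equalities backwards gives the converse implication, establishing the ``if and only if.'' The role of $d(0)>0$ is to keep the normalization \eqref{eq:normcond} well posed, i.e.\ to ensure the positivity of the first component at $z=0$ survives the transformation; I would note this briefly rather than belabor it.

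\bigskip

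I do not expect a genuine obstacle here: both parts are algebraic identities. The only point requiring mild care is bookkeeping of the boundary values $D_\pm$ on $\widetilde\Sigma$ versus the identity $D_+ = D_-$ on $\Sigma \setminus \widetilde\Sigma$, and the observation that conjugation by $\ssigI$ interchanges the diagonal entries, which is precisely the reason the hypothesis $d(z^{-1}) = d(z)^{-1}$ is exactly what is needed to match $D(z^{-1})$ with $\sigI D(z)\sigI$. Everything reduces to substituting the defining relations and using $\sigI^2 = \id$.
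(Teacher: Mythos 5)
Your proof is correct and is precisely the ``straightforward calculation'' the paper alludes to (it gives no written proof): boundary values $\ti{m}_\pm = m_\pm D_\pm$ yield $\ti{v}=D_-^{-1}vD_+$, and the identity $D(z^{-1})=\sigI D(z)\sigI$, equivalent to $d(z^{-1})=d(z)^{-1}$, together with $\sigI^2=\id$ and invertibility of $D$ gives the symmetry equivalence. Your side remark that $d(0)>0$ serves only to preserve the normalization \eqref{eq:normcond}, not the symmetry \eqref{eq:symcond} itself, is also accurate.
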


In particular, we obtain
\be
\ti{v} = \begin{pmatrix} v_{11} & v_{12} d^{2} \\ v_{21} d^{-2}  & v_{22} \end{pmatrix},
\qquad z\in\Sigma\backslash\wti{\Sigma},
\ee
respectively
\be
\ti{v} = \begin{pmatrix} \frac{d_-}{d_+} v_{11} & v_{12} d_+ d_- \\
v_{21} d_+^{-1} d_-^{-1}  & \frac{d_+}{d_-} v_{22} \end{pmatrix},
\qquad z\in\Sigma\cap\wti{\Sigma}.
\ee

In order to remove the poles there are two cases to distinguish. If
$\lam_k >\frac{1}{2}(\zeta_0+\zeta_0^{-1})$ the jump is exponentially decaying and there is nothing
to do.

Otherwise we use conjugation to turn the jumps into exponentially decaying
ones, again following Deift, Kamvissis, Kriecherbauer, and Zhou \cite{dkkz}.
It turns out that we will have to handle the poles at $\zeta_k$ and $\zeta_k^{-1}$
in one step in order to preserve symmetry and in order to not add additional poles
elsewhere.

\begin{lemma}\label{lem:twopolesinc}
Assume that the Riemann--Hilbert problem for $m$ has jump conditions near $\zeta$ and
$\zeta^{-1}$ given by
\be
\aligned
m_+(z)&=m_-(z)\begin{pmatrix}1&0\\ \frac{\gam \zeta}{z-\zeta}&1\end{pmatrix}, && |z-\zeta|=\eps, \\
m_+(z)&=m_-(z)\begin{pmatrix}1& \frac{\gam z}{z-\zeta^{-1}}\\0&1\end{pmatrix}, && |z^{-1}- \zeta|=\eps.
\endaligned
\ee
Then this Riemann--Hilbert problem is equivalent to a Riemann--Hilbert problem for $\ti{m}$ which has jump conditions near $\zeta$ and
$\zeta^{-1}$ given by
\begin{align*}
\ti{m}_+(z)&= \ti{m}_-(z)\begin{pmatrix}1& \frac{(\zeta z-1)^2}{\zeta (z-\zeta) \gam}\\ 0 &1\end{pmatrix},
&& |z-\zeta|=\eps, \\
\ti{m}_+(z)&= \ti{m}_-(z)\begin{pmatrix}1& 0 \\ \frac{(z-\zeta)^2}{\zeta z (\zeta z-1) \gam}&1\end{pmatrix},
&& |z^{-1}- \zeta|=\eps,
\end{align*}
and all remaining data conjugated (as in Lemma~\ref{lem:conjug}) by
\be
D(z) = \begin{pmatrix} \frac{z - \zeta}{\zeta z-1} & 0 \\ 0 & \frac{\zeta z-1}{z-\zeta} \end{pmatrix}.
\ee
\end{lemma}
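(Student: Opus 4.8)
The plan is to realize $\ti{m}=mD$ as an instance of the conjugation in Lemma~\ref{lem:conjug} and to show that its sole effect near $\zeta$ and $\zeta^{-1}$ is to interchange the lower- and upper-triangular pole structures. The essential cancellation only becomes visible in the meromorphic formulation, so I would first undo Lemma~\ref{lem:pctoj} and replace the two jumps on the small circles by the equivalent pole conditions
\[
\res_\zeta m = \lim_{z\to\zeta} m(z)\begin{pmatrix} 0 & 0 \\ -\gam\zeta & 0\end{pmatrix}, \qquad
\res_{\zeta^{-1}} m = \lim_{z\to\zeta^{-1}} m(z)\begin{pmatrix} 0 & \gam\zeta^{-1} \\ 0 & 0\end{pmatrix},
\]
the second being the symmetric image of the first under \eqref{eq:symcond}.

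Next I would record the elementary properties of $d(z)=\frac{\zeta z-1}{z-\zeta}$: it is analytic and nonvanishing on $\C\backslash\{\zeta,\zeta^{-1}\}$, has a simple pole at $\zeta$ with $\res_\zeta d = \zeta^2-1$ and a simple zero at $\zeta^{-1}$, and satisfies $d(z^{-1})=d(z)^{-1}$. The last identity lets me invoke Lemma~\ref{lem:conjug}: $\ti{m}=mD$ again obeys the symmetry condition \eqref{eq:symcond} and all jumps away from the two small circles are conjugated by $D$ exactly as asserted.

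The heart of the argument is the local analysis at $\zeta$. Writing $\ti{m}=(m_1 d^{-1}, m_2 d)$ and using the pole condition, $m_1$ has a simple pole and $m_2$ is regular at $\zeta$ with $\res_\zeta m_1 = -\gam\zeta\, m_2(\zeta)$. The simple zero of $d^{-1}$ cancels the pole of $m_1$, so $\ti{m}_1$ becomes regular with $\ti{m}_1(\zeta)=\frac{\res_\zeta m_1}{\zeta^2-1}$, while $\ti{m}_2=m_2 d$ now develops a simple pole with $\res_\zeta \ti{m}_2=(\zeta^2-1)m_2(\zeta)$. Hence
\[
\res_\zeta \ti{m} = \lim_{z\to\zeta}\ti{m}(z)\begin{pmatrix} 0 & \ti{\gam} \\ 0 & 0\end{pmatrix}, \qquad \ti{\gam}=-\frac{(\zeta^2-1)^2}{\gam\zeta},
\]
so the lower-triangular pole at $\zeta$ has turned into an upper-triangular one; by symmetry the pole at $\zeta^{-1}$ passes from upper to lower. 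Converting these back into jumps on the circles via the analogue of \eqref{eq:redefm} then reproduces the matrices in the statement.

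I expect the only real subtlety to be this last step. The value $\ti{\gam}$ pins the new off-diagonal entry down only up to a factor that is analytic and invertible inside the respective circle, and one must check that the \emph{symmetric} choice is precisely $\frac{(\zeta z-1)^2}{\zeta(z-\zeta)\gam}$. This has residue $\frac{(\zeta^2-1)^2}{\zeta\gam}$ at $\zeta$, in agreement with $-\ti{\gam}$, yet it carries the full factor $(\zeta z-1)^2=(z-\zeta)^2 d(z)^2$ rather than the constant $(\zeta^2-1)^2$; this extra $d^2$ is exactly what makes the transformation compatible with $z\mapsto z^{-1}$, so that the companion jump at $\zeta^{-1}$ comes out as $\frac{(z-\zeta)^2}{\zeta z(\zeta z-1)\gam}$ with the correct (clockwise) orientation. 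Verifying this symmetric matching and keeping the various signs straight is the one place where care is needed; everything else is the routine bookkeeping that Lemma~\ref{lem:conjug} is designed to absorb.
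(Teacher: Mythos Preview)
Your approach is correct and reaches the same $\ti m$ as the paper, but the organization differs. The paper stays entirely in the holomorphic framework: it writes down a single piecewise matrix $D(z)$ equal to the diagonal $D_0=\big(\begin{smallmatrix} d^{-1} & 0\\0 & d\end{smallmatrix}\big)$ outside the small discs and carrying extra off-diagonal factors inside them (chosen so that $D$ is analytic there despite the pole and zero of $d$), sets $\ti m=mD$, and reads off the new jumps by a direct matrix multiplication. You instead detour through the meromorphic picture---undo \eqref{eq:redefm}, apply only the diagonal $D_0$, observe that the simple zero/pole of $d^{\pm1}$ swaps the triangular pole structure, then reinstall circle jumps. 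If one multiplies out your three steps (inverse of the old redefinition, then $D_0$, then the new redefinition) one recovers exactly the paper's piecewise $D$ inside each disc, so the two arguments produce the same transformation; yours explains \emph{why} the off-diagonal pieces in the paper's $D$ look the way they do, while the paper's version is shorter and avoids the residue bookkeeping. The one place your write-up should be tightened is the final step: the new pole condition fixes the circle jump only up to an analytic unit inside the disc, so rather than invoking ``the symmetric choice'' you should verify directly that the stated pair of jumps satisfies $v(z^{-1})=\ssigI\, v(z)^{-1}\ssigI$ with the given orientations---a two-line check, but it deserves to appear explicitly.
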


\begin{proof}
To turn $\gam$ into $\gam^{-1}$, introduce $D$ by
\[
D(z) = \begin{cases}
\begin{pmatrix} 1 & \frac{1}{\gam} \frac{z-\zeta}{\zeta}\\ - \gam \frac{\zeta}{z-\zeta} & 0 \end{pmatrix}
\begin{pmatrix} \frac{z - \zeta}{\zeta z-1} & 0 \\ 0 & \frac{\zeta z-1}{z-\zeta} \end{pmatrix}, &  |z-\zeta|<\eps, \\
\begin{pmatrix} 0 & \gam \frac{z \zeta}{z \zeta -1} \\ -\frac{1}{\gam} \frac{z \zeta -1}{z \zeta} & 1 \end{pmatrix}
\begin{pmatrix} \frac{z - \zeta}{\zeta z-1} & 0 \\ 0 & \frac{\zeta z-1}{z-\zeta} \end{pmatrix}, & |z^{-1}-\zeta|<\eps, \\
\begin{pmatrix} \frac{z - \zeta}{\zeta z-1} & 0 \\ 0 & \frac{\zeta z-1}{z-\zeta} \end{pmatrix}, & \text{else},
\end{cases}
\]
and note that $D(z)$ is analytic away from the two circles. Now set $\ti{m}(z) = m(z) D(z)$, which is again
symmetric by $D(z^{-1})= \ssigI D(z) \ssigI$.
The jumps along $|z-\zeta|=\eps$ and $|z^{-1}- \zeta|=\eps$ follow by a straightforward calculation and
the remaining jumps follow from Lemma~\ref{lem:conjug}.
\end{proof}

Now we are ready to prove our main result:

\begin{proof}[Proof of Theorem~\ref{thm:asym}]
We begin by observing that the partial transmission coefficient $T(z,z_0)$ introduced in \eqref{def:Tzz0}
satisfies the following scalar meromorphic Riemann--Hilbert problem:
\begin{enumerate}
\item $T(z,z_0)$ is meromorphic in $\C\backslash\Sigma(z_0)$, where $\Sigma(z_0)$ is the
arc given by $\Sigma(z_0) = \{z \in\T | \re(z)<\re(z_0)\}$, with simple poles at
$\zeta_k$ and simple zeros at $\zeta_k^{-1}$ for all $k$ with
$\lam_k < \frac{1}{2}(\zeta_0+\zeta_0^{-1})$,
\item
$T_+(z,z_0) = T_-(z,z_0) (1 - |R(z)|^2)$ for $z\in\Sigma(z_0)$,
\item
$T(z^{-1},z_0) = T(z,z_0)^{-1}$, $z\in\C\backslash\Sigma(z_0)$, and $T(0,z_0)>0$.
\end{enumerate}
Note also $\ol{T(z,z_0)}=T(\ol{z},z_0)$ and in particular $T(z,z_0)$ is real-valued for
$z\in\R$.

Next introduce
\[
D(z) = \begin{cases}
\begin{pmatrix} 1 & \frac{z-\zeta_k}{\zeta_k \gam_k \E^{t\Phi(\zeta_k)}}\\
- \frac{\zeta_k \gam_k \E^{t\Phi(\zeta_k)}}{z-\zeta_k} & 0 \end{pmatrix}
D_0(z), &  |z-\zeta_k|<\eps, \: \lam_k < \frac{1}{2}(\zeta_0+\zeta_0^{-1}),\\
\begin{pmatrix} 0 & \frac{z \zeta_k \gam_k \E^{t\Phi(\zeta_k)}}{z \zeta_k -1} \\
-\frac{z \zeta_k -1}{z \zeta_k \gam_k \E^{t\Phi(\zeta_k)}} & 1 \end{pmatrix}
D_0(z), & |z^{-1}-\zeta_k|<\eps, \: \lam_k < \frac{1}{2}(\zeta_0+\zeta_0^{-1}),\\
D_0(z), & \text{else},
\end{cases}
\]
where
\[
D_0(z) = \begin{pmatrix} T(z,z_0)^{-1} & 0 \\ 0 & T(z,z_0) \end{pmatrix}.
\]
Note that we have
\[
D(z^{-1})= \sigI D(z) \sigI.
\]
Now we conjugate our vector $m(z)$ defined in \eqref{defm} respectively \eqref{eq:redefm}
using $D(z)$, $\ti{m}(z)=m(z) D(z)$. Since $T(z,z_0)$ is either nonzero and continuous near
$z=\pm1$ (if $\pm 1 \notin\Sigma(z_0)$) or it has the same behaviour as $T(z)$ near $z=\pm 1$
(if $\pm 1 \in\Sigma(z_0)$), the new vector $\ti{m}(z)$ is again continuous near $z=\pm 1$
(even if $T(z)$ vanishes there).

Then using Lemma~\ref{lem:conjug} and Lemma~\ref{lem:twopolesinc} the jump
corresponding $\lam_k <\frac{1}{2}(\zeta_0+\zeta_0^{-1})$ (if any) is given by
\be
\aligned
\ti{v}(z) &= \begin{pmatrix}1& \frac{z-\zeta_k}{\zeta_k
\gam_k T(z,z_0)^{-2} \E^{t\Phi(\zeta_k)} }\\ 0 &1\end{pmatrix},
\qquad |z-\zeta_k|=\eps, \\
\ti{v}(z) &= \begin{pmatrix}1& 0 \\ \frac{\zeta_k z -1}{\zeta_k z \gam_k T(z,z_0)^2
\E^{t\Phi(\zeta_k)}}&1\end{pmatrix},
\qquad |z^{-1}- \zeta_k|=\eps,
\endaligned
\ee
and corresponding $\lam_k \ge \frac{1}{2}(\zeta_0+\zeta_0^{-1})$ (if any) by
\be
\aligned
\ti{v}(z) &= \begin{pmatrix} 1 & 0 \\ \frac{\zeta_k \gam_k T(z,z_0)^{-2} \E^{t\Phi(\zeta_k)}}{z-\zeta_k}
 & 1 \end{pmatrix},
\qquad |z-\zeta_k|=\eps, \\
\ti{v}(z) &= \begin{pmatrix} 1 & \frac{z \gam_k T(z,z_0)^2 \E^{t\Phi(\zeta_k)}}{z-\zeta_k^{-1}} \\
0 & 1 \end{pmatrix},
\qquad |z^{-1}-\zeta_k|=\eps.
\endaligned
\ee
In particular, an investigation of the sign of $\re(\Phi(z))$ shows
that all off-diagonal entries of these jump matrices, except for possibly one if
$\zeta_{k_0}=\zeta_0$ for some $k_0$, are exponentially decreasing. In this case we will keep the
pole condition which now reads
\be
\aligned
\res_{\zeta_{k_0}} \ti{m}(z) &= \lim_{z\to\zeta_{k_0}} \ti{m}(z)
\begin{pmatrix} 0 & 0\\ - \zeta_{k_0} \gam_{k_0} T(\zeta_{k_0},z_0)^{-2} \E^{t\Phi(\zeta_{k_0})}  & 0 \end{pmatrix},\\
\res_{\zeta_{k_0}^{-1}} \ti{m}(z) &= \lim_{z\to\zeta_{k_0}^{-1}} \ti{m}(z)
\begin{pmatrix} 0 & \zeta_{k_0}^{-1} \gam_{k_0} T(\zeta_{k_0},z_0)^{-2} \E^{t\Phi(\zeta_{k_0})} \\ 0 & 0 \end{pmatrix}.
\endaligned
\ee
Furthermore, the jump along
$\T$ is given by
\be
\ti{v}(z) = \begin{cases}
\ti{b}_-(z)^{-1} \ti{b}_+(z), \qquad \lam(z)> \lam_0,\\
\ti{B}_-(z)^{-1} \ti{B}_+(z), \qquad \lam(z)< \lam_0,\\
\end{cases}
\ee
where
\[
\ti{b}_-(z) = \begin{pmatrix} 1 & \frac{R(z^{-1}) \E^{-t\Phi(z)}}{T(z^{-1},z_0)^2} \\ 0 & 1 \end{pmatrix}, \quad
\ti{b}_+(z) = \begin{pmatrix} 1 & 0 \\ \frac{R(z) \E^{t\Phi(z)}}{T(z,z_0)^2}& 1 \end{pmatrix},
\]
and
\[
\ti{B}_-(z) = \begin{pmatrix} 1 & 0 \\ - \frac{T_-(z,z_0)^{-2}}{1-|R(z)|^2} R(z) \E^{t\Phi(z)} & 1 \end{pmatrix}\!, \quad
\ti{B}_+(z) = \begin{pmatrix} 1 & - \frac{T_+(z,z_0)^2}{1-|R(z)|^2} \ol{R(z)} \E^{-t\Phi(z)} \\ 0 & 1 \end{pmatrix}\!.
\]
Now the jump along $\T$ can also be made arbitrarily small following
the nonlinear steepest descent method developed by Deift and Zhou \cite{dz}:
Split the Fourier transform of the reflection coefficient into a part which has an analytic
extension to a neighborhood of the unit circle plus a small error. Move the analytic part
away from the unit circle using the factorizations from above. Since the Fourier
coefficients decay faster than any polynomial, the errors from both parts can be
made $O(t^{-l})$ for any $l\in\N$. We refer to \cite{dz} respectively \cite{km} for details.
Hence we can apply Theorem~\ref{thm:remcontour} as follows:

If $|\frac{n}{t} - c_k|>\eps$ for all $k$ we can choose $\gam_0=0$ and $w_0^t\equiv 0$.
Since the error between $w^t$ and $w_0^t$ is exponentially small, this proves the second
part of Theorem~\ref{thm:asym} upon comparing
\be
m(z) = \hat{m}(z) \begin{pmatrix} T(z,z_0) & 0\\ 0 & T(z,z_0)^{-1} \end{pmatrix}
\ee
with \eqref{eq:AB}. 

Otherwise, if $|\frac{n}{t} - c_k|<\eps$ for some $k$, we choose $\gam_0^t=\gam_k(n,t)$ and
$w_0^t \equiv 0$. Again we conclude that the error between $w^t$ and $w_0^t$ is exponentially small,
proving the first part of Theorem~\ref{thm:asym}.
\end{proof}

\appendix

\section{Singular integral equations}
\label{sec:sieq}

In this section we show how to transform a meromorphic vector Riemann--Hilbert problem
with simple poles at $\zeta$, $\zeta^{-1}$,
\begin{align}\nn
& m_+(z) = m_-(z) v(z), \qquad z\in \Sigma,\\ \label{eq:rhp5m}
& \res_{\zeta} m(z) = \lim_{z\to\zeta} m(z)
\begin{pmatrix} 0 & 0\\ - \zeta \gam  & 0 \end{pmatrix},\quad
\res_{\zeta^{-1}} m(z) &= \lim_{z\to\zeta^{-1}} m(z)
\begin{pmatrix} 0 & \zeta^{-1} \gam \\ 0 & 0 \end{pmatrix},\\ \nn
& m(z^{-1}) = m(z) \sigI,\\ \nn
& m(0) = \begin{pmatrix} 1 & m_2\end{pmatrix},
\end{align}
where $\zeta\in(-1,0)\cup(0,1)$ and $\gam\ge 0$, into a singular integral equation.
Since we require the symmetry condition \eqref{eq:symcond} for our Riemann--Hilbert
problems we need to adapt the usual Cauchy kernel to preserve this symmetry.
Moreover, we keep the single soliton as an inhomogeneous term which will play
the role of the leading asymptotics in our applications.

\begin{hypothesis}\label{hyp:sym}
Let $\Sigma$ consist of a finite number of smooth oriented finite curves in $\C$
which intersect at most finitely many times with all intersections being transversal.
Assume that the contour $\Sigma$ does not contain $0$, $\zeta$ and is invariant under
$z\mapsto z^{-1}$. It is oriented such that under the mapping $z\mapsto z^{-1}$
sequences converging from the positive sided to $\Sigma$ are mapped to sequences
converging to the negative side. Moreover, suppose the jump matrix $v$
can be factorized according to $v = b_-^{-1} b_+ = (\id-w_-)^{-1}(\id+w_+)$, where
$w_\pm = \pm(b_\pm-\id)$ are continuous and satisfy
\be\label{eq:wpmsym}
w_\pm(z^{-1}) = \sigI w_\mp(z) \sigI,\quad z\in\Sigma.
\ee
\end{hypothesis}

The classical Cauchy-transform
of a function $f:\Sigma\to \C$ which is square integrable is the
analytic function $C f: \C\backslash\Sigma\to\C$ given by
\be
C f(z) = \frac{1}{2\pi\I} \int_{\Sigma} \frac{f(s)}{s - z} ds,\qquad z\in\C\backslash\Sigma.
\ee
Denote the non-tangential boundary values from both sides (taken possibly
in the $L^2$-sense --- see e.g.\ \cite[eq.\ (7.2)]{deiftbook}) by $C_+ f$ respectively $C_- f$.
Then it is well-known that $C_+$ and $C_-$ are bounded operators $L^2(\Sigma)\to L^2(\Sigma)$, which satisfy $C_+ - C_- = \id$ and $C_+ C_- = 0$ (see e.g. \cite{bc}). Moreover, one has
the Plemelj--Sokhotsky formula (\cite{mu})
\[
C_\pm = \frac{1}{2} (\I H \pm \id),
\]
where
\be
H f(t) = \frac{1}{\pi} \dashint_\Sigma \frac{f(s)}{t-s} ds,\qquad t\in\Sigma,
\ee
is the Hilbert transform and $\dashint$ denotes the principal value integral.

In order to respect the symmetry condition we will restrict our attention to
the set $L^2_{s}(\Sigma)$ of square integrable functions $f:\Sigma\to\C^{2}$ such that
\be\label{eq:sym}
f(z^{-1}) = f(z) \sigI.
\ee
Clearly this will only be possible if we require our jump data to be symmetric as well (i.e.,
Hypothesis~\ref{hyp:sym} holds).

Next we introduce the Cauchy operator
\be
(C f)(z) = \frac{1}{2\pi\I} \int_\Sigma f(s) \Omega_\zeta(s,z)
\ee
acting on vector-valued functions $f:\Sigma\to\C^{2}$.
Here the Cauchy kernel is given by
\be
\Omega_{\zeta}(s,z) =
\begin{pmatrix} \frac{z-\zeta^{-1}}{s-\zeta^{-1}} \frac{1}{s-z} & 0 \\
0 & \frac{z-\zeta}{s-\zeta} \frac{1}{s-z} \end{pmatrix} ds =
\begin{pmatrix} \frac{1}{s-z} - \frac{1}{s-\zeta^{-1}} & 0 \\
0 & \frac{1}{s-z} - \frac{1}{s-\zeta} \end{pmatrix} ds,
\ee
for some fixed $\zeta\notin\Sigma$. In the case $\zeta=\infty$ we set
\be
\Omega_{\infty}(s,z) =
\begin{pmatrix} \frac{1}{s-z} - \frac{1}{s} & 0 \\
0 & \frac{1}{s-z} \end{pmatrix} ds.
\ee
and one easily checks the symmetry property:
\be\label{eq:symC}
\Omega_\zeta(1/s,1/z) = \sigI \Omega_\zeta(s,z) \sigI.
\ee
The properties of $C$ are summarized in the next lemma.

\begin{lemma}
Assume Hypothesis~\ref{hyp:sym}.
The Cauchy operator $C$ has the properties, that the boundary values
$C_\pm$ are bounded operators $L^2_s(\Sigma) \to L^2_s(\Sigma)$
which satisfy
\be\label{eq:cpcm}
C_+ - C_- = \id
\ee
and
\be\label{eq:Cnorm}
(Cf)(\zeta^{-1}) = (0\quad\ast), \qquad (Cf)(\zeta) = (\ast\quad 0).
\ee
Furthermore, $C$ restricts to $L^2_{s}(\Sigma)$, that is
\be
(C f) (z^{-1}) = (Cf)(z) \sigI,\quad z\in\C\backslash\Sigma
\ee
for $f\in L^2_{s}(\Sigma)$ and if $w_\pm$ satisfy (H.\ref{hyp:sym}) we also have
\be \label{eq:symcpm}
C_\pm(f w_\mp)(1/z) = C_\mp(f w_\pm)(z) \sigI,\quad z\in\Sigma.
\ee
\end{lemma}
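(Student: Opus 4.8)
The plan is to treat $C$ componentwise and reduce everything to the classical Cauchy transform plus a bounded, $z$-independent correction. Since the kernel $\Omega_\zeta(s,z)$ is diagonal, writing $f=(f_1,f_2)$ one has
\[
(Cf)(z)=\Big( \tfrac{1}{2\pi\I}\!\int_\Sigma f_1(s)\big(\tfrac{1}{s-z}-\tfrac{1}{s-\zeta^{-1}}\big)ds,\;
\tfrac{1}{2\pi\I}\!\int_\Sigma f_2(s)\big(\tfrac{1}{s-z}-\tfrac{1}{s-\zeta}\big)ds\Big),
\]
so each component is an ordinary Cauchy transform minus a term that is constant in $z$ (the value of that Cauchy transform at $\zeta^{-1}$, resp.\ $\zeta$). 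Because $\zeta,\zeta^{-1}\notin\Sigma$, these subtracted terms are continuous in $z$ across $\Sigma$ and define bounded functionals of $f$; hence the boundary values $C_\pm$ differ from the classical ones $\tfrac12(\I H\pm\id)$ only by $z$-independent corrections. From this \eqref{eq:cpcm} is immediate (the correction contributes no jump, so $C_+-C_-$ equals the Plemelj jump $\id$), boundedness on $L^2(\Sigma)$ follows from boundedness of the classical $C_\pm$ together with the rank-finite correction, and \eqref{eq:Cnorm} follows by simply setting $z=\zeta^{-1}$ (resp.\ $z=\zeta$): the two terms in the first (resp.\ second) component then cancel.

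The core of the lemma is the symmetry. For the restriction statement $(Cf)(z^{-1})=(Cf)(z)\sigI$ I would start from the defining integral for $(Cf)(z^{-1})$ and substitute $s\mapsto s^{-1}$. Three ingredients combine: the contour $\Sigma$ is invariant under $z\mapsto z^{-1}$ (Hypothesis~\ref{hyp:sym}); the kernel obeys $\Omega_\zeta(1/s,1/z)=\sigI\,\Omega_\zeta(s,z)\,\sigI$, which already absorbs the transformation of $ds$ into $d(1/s)$; and $f\in L^2_s(\Sigma)$ means $f(s^{-1})=f(s)\sigI$ by \eqref{eq:sym}. After the substitution the two $\sigI$ factors produced by the kernel meet the $\sigI$ coming from the symmetry of $f$; using $\sigI\,\sigI=\id$ one of them is absorbed and the remaining $\sigI$ is pulled out on the right, turning the integral for $(Cf)(z^{-1})$ into that for $(Cf)(z)$ multiplied by $\sigI$. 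The only point that needs care is that $z\mapsto z^{-1}$ reverses the orientation induced on $\Sigma$; the orientation convention of Hypothesis~\ref{hyp:sym} (inversion sends the positive side to the negative side) is exactly what makes this sign bookkeeping consistent.

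For the boundary-value identity \eqref{eq:symcpm} I would run the same substitution mechanism on $h:=f\,w_\mp$ in place of $f$. The key algebraic observation is that \eqref{eq:wpmsym}, namely $w_\pm(z^{-1})=\sigI w_\mp(z)\sigI$, together with $f(z^{-1})=f(z)\sigI$, gives $h(s^{-1})\sigI=f(s^{-1})w_\mp(s^{-1})\sigI=f(s)\sigI\,\sigI w_\pm(s)\sigI\,\sigI=f(s)w_\pm(s)$, i.e.\ applying the inversion $\sigI$-symmetry to $f\,w_\mp$ produces precisely $f\,w_\pm$. Feeding this into the computation of the previous paragraph yields $(C(fw_\mp))(z^{-1})=(C(fw_\pm))(z)\sigI$ off the contour. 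Taking non-tangential limits onto $\Sigma$ and invoking once more that inversion interchanges the two sides, the $+$ boundary value at $z^{-1}$ becomes the $-$ boundary value at $z$ (and vice versa), which is exactly $C_\pm(fw_\mp)(1/z)=C_\mp(fw_\pm)(z)\sigI$. The boundedness of $C_\pm$ on the symmetric subspace, and the fact that the combination $C_+(\cdot\,w_-)+C_-(\cdot\,w_+)$ maps $L^2_s(\Sigma)$ into itself, then follow by combining \eqref{eq:symcpm} with \eqref{eq:cpcm}.

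The main obstacle is exactly this orientation-and-side bookkeeping under the inversion $z\mapsto z^{-1}$: the map simultaneously reverses the orientation of $\Sigma$, swaps its two sides, and conjugates by $\sigI$ (both through the kernel via \eqref{eq:symC} and through the symmetry of the data), and one must verify that all three combine with the correct signs. This is where Hypothesis~\ref{hyp:sym} is used in full, and it is also the reason the symmetry forces the $C_+\leftrightarrow C_-$ interchange in \eqref{eq:symcpm} rather than leaving each $C_\pm$ individually symmetric. Once the kernel symmetry \eqref{eq:symC} is granted, everything else is careful bookkeeping rather than hard analysis.
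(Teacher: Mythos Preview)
Your proposal is correct and takes essentially the same approach as the paper: the paper's proof is the single sentence ``Everything follows from \eqref{eq:symC} and the fact that $C$ inherits all properties from the classical Cauchy operator,'' and your argument is precisely a detailed unpacking of that sentence --- the componentwise reduction to the classical Cauchy transform plus a $z$-independent correction handles boundedness, \eqref{eq:cpcm}, and \eqref{eq:Cnorm}, while the substitution $s\mapsto s^{-1}$ combined with the kernel symmetry \eqref{eq:symC} and the orientation convention in Hypothesis~\ref{hyp:sym} handles the two symmetry identities. Your careful tracking of the $\sigI$ factors and the $C_+\leftrightarrow C_-$ swap is exactly the bookkeeping the paper leaves implicit.
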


\begin{proof}
Everything follows from \eqref{eq:symC} and the fact that $C$ inherits all properties from
the classical Cauchy operator.
\end{proof}

We have thus obtained a Cauchy transform with the required properties.
Following Section 7 and 8 of \cite{bc}, we can solve our Riemann--Hilbert problem using this
Cauchy operator.

Introduce the operator $C_w: L_s^2(\Sigma)\to L_s^2(\Sigma)$ by
\be
C_w f = C_+(fw_-) + C_-(fw_+),\quad f\in L^2_s(\Sigma)
\ee
and recall from Lemma~\ref{lem:singlesoliton} that the unique
solution corresponding to $v\equiv \id$ is given by
\[
m_0(z)= \begin{pmatrix} f(z) & f(\frac{1}{z}) \end{pmatrix}, \quad
f(z) = \frac{1}{1 - \zeta^2 + \gamma}
\left(\gamma \zeta^2 \frac{z-\zeta^{-1}}{z - \zeta} + 1 - \zeta^2\right)
\]
Observe that for $\gam=0$ we have $f(z)=1$ and for $\gam=\infty$ we have
$f(z)= \zeta^2 \frac{z-\zeta^{-1}}{z - \zeta}$. In particular, $m_0(z)$ is uniformly bounded away from $\zeta$
for all $\gam\in[0,\infty]$.

Then we have the next result.

\begin{theorem}\label{thm:cauchyop}
Assume Hypothesis~\ref{hyp:sym}.

Suppose $m$ solves the Riemann--Hilbert problem \eqref{eq:rhp5m}. Then
\be\label{eq:mOm}
m(z) = (1-c_0) m_0(z) + \frac{1}{2\pi\I} \int_\Sigma \mu(s) (w_+(s) + w_-(s)) \Omega_\zeta(s,z),
\ee
where
\[
\mu = m_+ b_+^{-1} = m_- b_-^{-1} \quad\mbox{and}\quad
c_0= \left( \frac{1}{2\pi\I} \int_\Sigma \mu(s) (w_+(s) + w_-(s)) \Omega_\zeta(s,0) \right)_{\!1}.
\]
Here $(m)_j$ denotes the $j$'th component of a vector.
Furthermore, $\mu$ solves
\be\label{eq:sing4muc}
(\id - C_w) \mu = (1-c_0) m_0(z).
\ee

Conversely, suppose $\ti{\mu}$ solves
\be\label{eq:sing4mu}
(\id - C_w) \ti{\mu} = m_0(z),
\ee
and
\[
\ti{c}_0= \left( \frac{1}{2\pi\I} \int_\Sigma \ti{\mu}(s) (w_+(s) + w_-(s)) \Omega_\zeta(s,0) \right)_{\!1} \ne -1,
\]
then $m$ defined via \eqref{eq:mOm}, with $(1-c_0)=(1+\ti{c}_0)^{-1}$ and $\mu=(1+\ti{c}_0)^{-1}\ti{\mu}$,
solves the Riemann--Hilbert problem \eqref{eq:rhp5m} and $\mu= m_\pm b_\pm^{-1}$.
\end{theorem}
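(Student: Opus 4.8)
The plan is to follow the Beals--Coifman scheme (Sections~7 and~8 of \cite{bc}), adapted to the symmetric kernel $\Omega_\zeta$ and to the inhomogeneous leading term $m_0$. For the direct implication I would first set $\mu = m_+ b_+^{-1} = m_- b_-^{-1}$; this is continuous across $\Sigma$ precisely because the factorization $v = b_-^{-1} b_+$ removes the jump, and the symmetry relations \eqref{eq:wpmsym} together with $m(z^{-1}) = m(z)\sigI$ show $\mu \in L^2_s(\Sigma)$. The elementary identities $\mu w_\pm = \pm(m_\pm - \mu)$ give the key relation $\mu(w_+ + w_-) = m_+ - m_-$, so the density to be integrated is exactly the jump of $m$.

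Next I would introduce
\[
M(z) = m(z) - (1-c_0) m_0(z) - \frac{1}{2\pi\I}\int_\Sigma \mu(s)(w_+(s)+w_-(s))\,\Omega_\zeta(s,z)
\]
and show $M\equiv 0$. It has no jump on $\Sigma$: the Cauchy term contributes $C_+-C_-=\id$ applied to $\mu(w_++w_-)=m_+-m_-$, cancelling the jump of $m$, while $m_0$ is continuous there. It has at most simple poles at $\zeta,\zeta^{-1}$, inherited from $m$ and $m_0$, since $\Omega_\zeta(s,z)$ is pole-free at these points; using the pole conditions together with the fact that, by \eqref{eq:Cnorm}, the relevant component of the Cauchy term vanishes at $\zeta$ resp.\ $\zeta^{-1}$, one checks that $M$ again satisfies the very same pole conditions. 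The constant $c_0$ is engineered so that the normalization matches: since the first components of $m(0)$ and $m_0(0)$ both equal $1$ and the first component of the Cauchy term at $z=0$ is $c_0$, one gets $M_1(0)=1-(1-c_0)-c_0=0$. Finally $M$, being symmetric with trivial jump, these pole conditions, and vanishing first component at $0$, must be identically zero by the determinant (Wronskian) argument of the uniqueness result proved above, now carried out with $m_0$ as the reference solution, whose two components vanish simultaneously at most at $z=\pm1$ and only with simple zeros. This establishes \eqref{eq:mOm}. Taking the $+$ boundary value, using $m_+=\mu b_+=\mu+\mu w_+$ and $C_+(\mu(w_++w_-))=C_w\mu+\mu w_+$, collapses \eqref{eq:mOm} to $(\id-C_w)\mu=(1-c_0)m_0$, which is \eqref{eq:sing4muc}.

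For the converse, given $\ti\mu$ with $(\id-C_w)\ti\mu=m_0$ and $\ti c_0\ne -1$, I would rescale $\mu=(1+\ti c_0)^{-1}\ti\mu$; then $(\id-C_w)\mu=(1+\ti c_0)^{-1}m_0$, and a short linear computation gives $c_0=\ti c_0/(1+\ti c_0)$, hence $1-c_0=(1+\ti c_0)^{-1}$, matching \eqref{eq:sing4muc}. Defining $m$ by \eqref{eq:mOm}, the boundary values are $m_\pm=(1-c_0)m_0+C_\pm(\mu(w_++w_-))$, and substituting the integral equation in the form $(1-c_0)m_0+C_w\mu=\mu$ yields $m_+=\mu b_+$ and $m_-=\mu b_-$; this gives simultaneously $\mu=m_\pm b_\pm^{-1}$ and the jump $m_+=m_- v$. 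The pole conditions hold because the only poles of $m$ come from $(1-c_0)m_0$, the Cauchy term being pole-free with its relevant component vanishing at $\zeta$ resp.\ $\zeta^{-1}$ by \eqref{eq:Cnorm}; the symmetry follows since $\mu(w_++w_-)\in L^2_s(\Sigma)$ (using \eqref{eq:wpmsym}) and $C$ restricts to $L^2_s(\Sigma)$; and the normalization $m_1(0)=(1-c_0)+c_0=1$ holds by construction.

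I expect the genuine subtlety to be the bookkeeping around the scalar $c_0$. In the classical set-up one normalizes at $\infty$, where the Cauchy integral vanishes, and simply recovers $m=\id+C(\cdots)$. Here the normalization is imposed at the finite point $z=0$, at which the symmetric kernel $\Omega_\zeta$ does not vanish; this forces the self-referential correction $(1-c_0)m_0$, in which $c_0$ itself depends on $\mu$. Disentangling this coupling --- in particular producing the rescaling $(1-c_0)=(1+\ti c_0)^{-1}$ and isolating the nondegeneracy condition $\ti c_0\ne -1$ --- is the delicate point, whereas verifying that $\Omega_\zeta$ respects the symmetry and the pole structure is immediate from \eqref{eq:symC} and \eqref{eq:Cnorm}.
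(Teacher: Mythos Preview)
Your proposal is correct and follows precisely the route the paper indicates: the paper's own proof is the single line ``This can be shown as in the non-symmetric case (cf.\ e.g.\ \cite{deiftbook})'', and you have carried out exactly that adaptation of the Beals--Coifman scheme to the symmetric kernel $\Omega_\zeta$ and the one-soliton inhomogeneity $m_0$. Your identification of the $c_0$ bookkeeping as the only nonroutine point is accurate, and your use of the uniqueness theorem from Section~3 (with $m_0$ as reference, since $M$ has trivial jump) to kill the difference $M$ is a clean way to close the argument that fits naturally with the tools already developed in the paper.
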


\begin{proof}
This can be shown as in the non-symmetric case (cf.\ e.g.\ \cite{deiftbook}).
\end{proof}

Note that in the special case $\gamma=0$ we have $m_0(z)= \rI$ and
we can choose $\zeta$ as we please, say $\zeta=\infty$ such that $c_0=\ti{c}_0=0$
in the above theorem.

Hence we have a formula for the solution of our Riemann--Hilbert problem $m(z)$ in terms of
$(\id - C_w)^{-1} m_0$ and this clearly raises the question of bounded
invertibility of $\id - C_w$. This follows from Fredholm theory (cf.\ e.g. \cite{zh}):

\begin{lemma}
Assume Hypothesis~\ref{hyp:sym}.
The operator $\id-C_w$ is Fredholm of index zero,
\be
\ind(\id-C_w) =0.
\ee
\end{lemma}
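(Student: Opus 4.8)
The plan is to exhibit $\id-C_w$ as a classical matrix Cauchy singular integral operator, whose index is governed by the winding number of the determinant of its symbol, and then to exploit $\det v=1$ together with the symmetry to force this index to vanish.

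First I would put the operator into standard form. Inserting $\id=C_+-C_-$ from \eqref{eq:cpcm} and using $b_\pm=\id\pm w_\pm$, a direct manipulation gives
\[
(\id-C_w)f = C_+(f\,b_-) - C_-(f\,b_+),
\]
so that $\id-C_w = C_+R_{b_-}-C_-R_{b_+}$, where $R_{b_\pm}$ denotes right multiplication by $b_\pm$. The matrices $b_\pm$ are continuous by Hypothesis~\ref{hyp:sym} and pointwise invertible: $b_-=\id-w_-$ is invertible because the factorization $v=b_-^{-1}b_+$ requires it, and then $b_+=b_-\,v$ is invertible since $\det v=1$. Hence the symbol of the operator is invertible along $\Sigma$, and by the Beals--Coifman/Zhou theory of such operators \cite{bc,zh} (with the finitely many transversal self-intersections handled as in \cite{deiftbook}) $\id-C_w$ is Fredholm on $L^2(\Sigma)$, with index equal (up to the fixed sign convention) to $-\mathrm{wind}_\Sigma(\det v)$. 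Since $\det v\equiv 1$, this winding number is zero, so the index on the full space $L^2(\Sigma)$ vanishes.

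It remains to descend to the symmetric subspace. The kernel symmetry \eqref{eq:symC}, equivalently \eqref{eq:symcpm}, shows that $C_w$ commutes with the involution $(Jf)(z)=f(z^{-1})\sigI$, whose $(+1)$-eigenspace is exactly $L^2_s(\Sigma)$ by \eqref{eq:sym}; consequently $\id-C_w$ leaves $L^2_s(\Sigma)$ invariant and its restriction there is again Fredholm. To pin down the index directly on $L^2_s(\Sigma)$ I would use homotopy invariance: replacing $w_\pm$ by $\tau w_\pm$ preserves \eqref{eq:wpmsym}, and since $C_w$ is linear in $w$ the family $\id-\tau C_w$, $\tau\in[0,1]$, is a norm-continuous path on $L^2_s(\Sigma)$ joining $\id-C_w$ to the identity. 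In all cases arising in the applications the matrices $w_\pm$ are (pointwise) nilpotent, so $\id\pm\tau w_\pm$ stays invertible and every member of the path is Fredholm; the index is then constant along the path and equals $\ind(\id)=0$. I expect the genuine work to lie in this last reduction --- verifying that the symbol stays invertible along the homotopy and that the Fredholm property and index computation genuinely descend through the involution $J$ --- whereas the winding-number computation, being trivial once $\det v\equiv1$, is the easy part.
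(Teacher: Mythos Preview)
The paper itself offers no proof for this lemma beyond the preceding sentence ``This follows from Fredholm theory (cf.\ e.g.\ \cite{zh})'', so there is no paper-side argument to compare against; your outline is in fact an elaboration of what that citation points to. The rewriting $(\id-C_w)f=C_+(fb_-)-C_-(fb_+)$ is correct, and the Fredholm property together with the identification of the index with the winding number of $\det v$ on the full $L^2(\Sigma)$ is precisely the Beals--Coifman/Zhou machinery.

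One point deserves a flag. The lemma is stated for $C_w$ on the \emph{symmetric} space $L^2_s(\Sigma)$, and the full-space winding-number computation only yields $\ind(A_s)+\ind(A_a)=0$ for the two diagonal blocks under the involution $J$; it does not by itself force $\ind(A_s)=0$. Your homotopy $\tau\mapsto\tau w_\pm$ settles this directly on $L^2_s$ (and in fact renders the winding-number step redundant), but, as you yourself note, invertibility of $\id\pm\tau w_\pm$ along the path is only automatic when the $w_\pm$ are nilpotent, which is not part of Hypothesis~\ref{hyp:sym}. So strictly speaking your argument proves the lemma only under that extra structural assumption. For the present paper this is harmless---every factorization actually used (the $\ti b_\pm$, $\ti B_\pm$ in the proof of Theorem~\ref{thm:asym} and the small-circle jumps of Lemma~\ref{lem:pctoj}) has strictly triangular $w_\pm$---but if you want the lemma in the stated generality you should either record nilpotency as a hypothesis or supply a separate argument identifying $\ind(A_s)$ with $\ind(A_a)$.
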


By the Fredholm alternative, it follows that to show the bounded invertibility of $\id-C_w$
we only need to show that $\ker (\id-C_w) =0$. The latter being equivalent to
unique solvability of the corresponding vanishing Riemann--Hilbert problem in the case $\gam=0$
(where we can choose $\zeta=\infty$ such that $c_0=\ti{c}_0=0$).

\begin{corollary}
Assume Hypothesis~\ref{hyp:sym}.
A unique solution of the Riemann--Hilbert problem \eqref{eq:rhp5m} with $\gam=0$ exists if and only if
the corresponding vanishing Riemann--Hilbert problem, where the normalization condition is replaced by
$m(0)= \begin{pmatrix} 0 & m_2\end{pmatrix}$, with $m_2$ arbitrary, has at most one solution.
\end{corollary}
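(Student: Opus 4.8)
The plan is to run everything through the correspondence of Theorem~\ref{thm:cauchyop} and then to invoke the Fredholm alternative for $\id-C_w$. Fix $\gam=0$ and, as in the remark following Theorem~\ref{thm:cauchyop}, take $\zeta=\infty$, so that $m_0(z)=\rI$ and the constants $c_0$ and $\ti c_0$ vanish automatically. With these choices Theorem~\ref{thm:cauchyop} becomes a clean equivalence: if $m$ solves \eqref{eq:rhp5m} then $\mu=m_\pm b_\pm^{-1}$ solves $(\id-C_w)\mu=m_0$, and conversely every solution of $(\id-C_w)\mu=m_0$ (the side condition $\ti c_0\ne-1$ being trivially met) is reconstructed into a solution of \eqref{eq:rhp5m} by \eqref{eq:mOm}. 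These maps are mutually inverse, so solutions of the Riemann--Hilbert problem with $\gam=0$ are in bijection with solutions of the singular integral equation $(\id-C_w)\mu=m_0$; in particular the former has a unique solution precisely when the latter does.

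Next I would combine the preceding lemma, which gives $\ind(\id-C_w)=0$, with the Fredholm alternative. For an index-zero Fredholm operator the inhomogeneous equation $(\id-C_w)\mu=m_0$ has exactly one solution if and only if $\id-C_w$ is boundedly invertible, equivalently $\ker(\id-C_w)=\{0\}$: were the kernel nontrivial, then $m_0$ either misses the range, giving no solution, or lies in it, giving infinitely many, so a \emph{unique} solution is impossible. Together with the bijection above this already shows that a unique solution of \eqref{eq:rhp5m} with $\gam=0$ exists if and only if $\ker(\id-C_w)=\{0\}$.

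It remains to match $\ker(\id-C_w)=\{0\}$ with the assertion about the vanishing problem. For $\mu\in\ker(\id-C_w)$ I would form the Cauchy integral
\[
m(z)=\frac{1}{2\pi\I}\int_\Sigma \mu(s)\,(w_+(s)+w_-(s))\,\Omega_\infty(s,z),
\]
which inherits the jump and symmetry conditions and, since the $(1,1)$-entry of $\Omega_\infty(s,0)$ vanishes, satisfies $m(0)=\begin{pmatrix}0 & m_2\end{pmatrix}$; thus $m$ solves the vanishing problem. Conversely, reading the direct half of Theorem~\ref{thm:cauchyop} with the vanishing normalization shows that any solution $m$ of the vanishing problem yields $\mu=m_\pm b_\pm^{-1}\in\ker(\id-C_w)$. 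These assignments are again mutually inverse and send $\mu=0$ to $m\equiv0$. Since $m\equiv0$ is always a solution of the vanishing problem, ``at most one solution'' means precisely that $m\equiv0$ is the only one, i.e.\ $\ker(\id-C_w)=\{0\}$, which closes the chain of equivalences.

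The only genuinely delicate step, and the one deserving the most care, is the middle equivalence: that unique solvability of the inhomogeneous equation for the single right-hand side $m_0$ already forces full invertibility of $\id-C_w$. This is exactly where the index-zero property enters, for without it unique solvability at one particular $m_0$ would not preclude a nontrivial kernel. The remaining steps are bookkeeping, transporting solutions between the Riemann--Hilbert problem and the singular integral equation by means of Theorem~\ref{thm:cauchyop}.
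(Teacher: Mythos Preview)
Your argument is correct and follows precisely the route the paper intends: the corollary is stated without its own proof, being an immediate consequence of the paragraph preceding it, which invokes the Fredholm alternative for $\id-C_w$ (index zero) together with the correspondence of Theorem~\ref{thm:cauchyop} in the special case $\gam=0$, $\zeta=\infty$, $c_0=\ti c_0=0$. Your write-up simply unpacks that sentence into the three equivalences (RHP $\leftrightarrow$ inhomogeneous equation, unique solvability $\leftrightarrow$ trivial kernel, trivial kernel $\leftrightarrow$ vanishing problem has only the zero solution), and the care you take with the middle step---that unique solvability at the single datum $m_0$ already forces $\ker(\id-C_w)=\{0\}$ via index zero---is exactly the point.
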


We are interested in comparing two Riemann--Hilbert problems associated with
respective jumps $w_0$ and $w$ with $\|w-w_0\|_\infty$ small,
where
\be
\|w\|_\infty= \|w_+\|_{L^\infty(\Sigma)} + \|w_-\|_{L^\infty(\Sigma)}.
\ee
For such a situation we have the following result:

\begin{theorem}\label{thm:remcontour}
Assume that for some data $\zeta_0$, $\gam_0^t$, $w_0^t$ the operator
\be
\id-C_{w_0^t}: L^2_s(\Sigma) \to L^2_s(\Sigma)
\ee
has a bounded inverse, where the bound is independent of $t$, and
suppose that the corresponding $\ti{c}_{0,0}^t$ is away from $-1$ again
uniformly in $t$.

Furthermore, let $\zeta=\zeta_0$, $\gam^t=\gam_0^t$ and assume $w^t$ satisfies
\be
\|w^t - w_0^t\|_\infty \leq \alpha(t)
\ee
for some function $\alpha(t) \to 0$ as $t\to\infty$. Then
$(\id-C_{w^t})^{-1}: L^2_s(\Sigma)\to L^2_s(\Sigma)$ also exists
for sufficiently large $t$ and the associated solutions of
the Riemann--Hilbert problems \eqref{eq:rhp5m} only differ by $O(\alpha(t))$
uniformly in $z$ away from $\Sigma \cup \{\zeta,\zeta^{-1}\}$.
\end{theorem}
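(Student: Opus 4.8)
The plan is to run the standard second-resolvent perturbation argument of the Deift--Zhou method, now adapted to the symmetric Cauchy operator $C_w$ constructed in Theorem~\ref{thm:cauchyop}, and then to transfer the resulting operator-norm estimate into a pointwise bound on the reconstructed vector via the Cauchy kernel $\Omega_\zeta$. First I would compare the two Cauchy operators on $L^2_s(\Sigma)$. Since $C_{w^t} f = C_+(f w^t_-) + C_-(f w^t_+)$ and the boundary operators $C_\pm$ are bounded on $L^2_s(\Sigma)$ with norms depending only on the fixed contour $\Sigma$ (and not on $t$), bilinearity gives
\[
\|C_{w^t} - C_{w_0^t}\| \le (\|C_+\| + \|C_-\|)\,\|w^t - w_0^t\|_\infty \le \text{const}\cdot\alpha(t).
\]
Because $\id - C_{w_0^t}$ is boundedly invertible uniformly in $t$ by hypothesis, a Neumann series in $(C_{w^t}-C_{w_0^t})(\id-C_{w_0^t})^{-1}$ converges once $\alpha(t)$ is small enough, i.e.\ for $t$ large. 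This yields existence of $(\id - C_{w^t})^{-1}$ and, via the second resolvent identity, the uniform bound $\|(\id - C_{w^t})^{-1} - (\id-C_{w_0^t})^{-1}\| = O(\alpha(t))$, with both inverses uniformly bounded.

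Next I would feed this into the reconstruction formula \eqref{eq:mOm}. Applying the two resolvents to the common right-hand side $m_0$ of \eqref{eq:sing4mu} and using that $m_0$ is uniformly bounded (as noted after its definition, for all $\gam\in[0,\infty]$) gives $\|\ti\mu^t - \ti\mu_0^t\|_{L^2_s} = O(\alpha(t))$. The associated constants obey $\ti c_0^t - \ti c_{0,0}^t = O(\alpha(t))$, so by the hypothesis that $\ti c_{0,0}^t$ stays away from $-1$ uniformly, the normalizing factors $(1+\ti c_0^t)^{-1}$ are well defined and uniformly bounded for large $t$; hence $\mu^t = (1+\ti c_0^t)^{-1}\ti\mu^t$ and $\mu_0^t$ also differ by $O(\alpha(t))$ in $L^2_s(\Sigma)$.

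Finally, for $z$ away from $\Sigma\cup\{\zeta,\zeta^{-1}\}$, the kernel $s\mapsto\Omega_\zeta(s,z)$ lies in $L^2(\Sigma)$ with norm bounded uniformly for $z$ in that region (the factor $(s-z)^{-1}$ being bounded there and the remaining factors being regular at $\zeta$, $\zeta^{-1}$). Writing the difference $m^t(z)-m_0^t(z)$ from \eqref{eq:mOm} and splitting the integrand $\mu^t(w^t_++w^t_-)-\mu_0^t(w^t_{0,+}+w^t_{0,-})$ into the telescoping pieces $(\mu^t-\mu_0^t)(w^t_++w^t_-)$ and $\mu_0^t\big((w^t_+-w^t_{0,+})+(w^t_--w^t_{0,-})\big)$, each is estimated by Cauchy--Schwarz using, respectively, $\|\mu^t-\mu_0^t\|_{L^2}=O(\alpha(t))$ with $\|w^t\|_\infty$ bounded, and $\|\mu_0^t\|_{L^2}$ bounded with $\|w^t-w_0^t\|_\infty\le\alpha(t)$. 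The scalar term contributes $(c_0^t-c_{0,0}^t)\,m_0(z)=O(\alpha(t))$. This gives $m^t(z)-m_0^t(z)=O(\alpha(t))$ uniformly in the stated region.

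The main obstacle I anticipate is not the perturbation estimate itself but maintaining uniformity in $t$ while the soliton leading term $m_0$ is carried as the inhomogeneity: the normalization constant $c_0^t$ depends on the unknown $\mu^t$, so invertibility of $\id-C_{w^t}$ does not by itself guarantee solvability of the \emph{normalized} problem \eqref{eq:rhp5m}. This is precisely why the hypothesis insists that $\ti c_{0,0}^t$ be bounded away from $-1$ uniformly in $t$; checking that $\ti c_0^t$ inherits this property (rather than drifting toward $-1$) under the perturbation is the delicate point, and it is exactly what the $O(\alpha(t))$ bound on $\ti c_0^t - \ti c_{0,0}^t$ secures.
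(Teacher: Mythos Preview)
Your proposal is correct and follows essentially the same approach as the paper's proof: bound $\|C_{w^t}-C_{w_0^t}\|$ by the boundedness of $C_\pm$, use the second resolvent identity (Neumann series) to control the inverses, then propagate the $O(\alpha(t))$ bound through $\ti\mu$, $\ti c_0$, $\mu$, and finally the reconstruction formula \eqref{eq:mOm}. You have in fact supplied more detail than the paper's terse proof---in particular the telescoping and Cauchy--Schwarz step for the pointwise bound, and the explicit check that $\ti c_0^t$ stays away from $-1$---but the overall strategy is identical.
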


\begin{proof}
By boundedness of the Cauchy transform, one has
\[
\|(C_{w^t} - C_{w_0^t})\| \leq const \|w\|_\infty.
\]
Thus, by the second resolvent identity, we infer that $(\id-C_{w^t})^{-1}$ exists for large $t$ and
\[
\|(\id-C_{w^t})^{-1}-(\id-C_{w_0^t})^{-1}\| = O(\alpha(t)).
\]
From which the claim follows since this implies $|\ti{c}_0^t - \ti{c}_{0,0}^t| = O(\alpha(t))$
respectively $\|\mu^t - \mu_0^t\|_{L^2} = O(\alpha(t))$ and thus
$m^t(z) - m_0^t(z) = O(\alpha(t))$ uniformly in $z$ away from $\Sigma \cup \{\zeta,\zeta^{-1}\}$.
\end{proof}

\noindent
{\bf Acknowledgments.}
We thank S.\ Kamvissis for several helpful discussions and I.\ Egorova, K.\ Grunert and A.\ Mikikits-Leitner
for pointing out errors in a previous version of this article.


\begin{thebibliography}{XXX}
\bibitem{bc} R. Beals and R. Coifman, {\em Scattering and inverse scattering for
first order systems}, Comm. in Pure and Applied Math. {\bf 37}, 39--90 (1984).
\bibitem{bdt} R. Beals, P. Deift, and C. Tomei, {\em Direct and Inverse Scattering on the Real Line},
Math. Surv. and Mon. {\bf 28}, Amer. Math. Soc., Rhode Island, 1988.
\bibitem{deiftbook} P. Deift, {\em Orthogonal Polynomials and Random Matrices:
A Riemann--Hilbert Approach}, Courant Lecture Notes {\bf 3}, Amer. Math. Soc., Rhode Island, 1998.
\bibitem{dz} P. Deift and X. Zhou, {\em A steepest descent method for oscillatory
Riemann--Hilbert problems}, Ann. of Math. (2) {\bf 137}, 295--368 (1993).
\bibitem{dkkz} P. Deift, S. Kamvissis, T. Kriecherbauer, and X. Zhou,
{\em The Toda rarefaction problem}, Comm. Pure Appl. Math. {\bf 49}, no. 1, 35--83 (1996).
\bibitem{emt4} I. Egorova, J. Michor, and G. Teschl,
{\em Soliton solutions of the Toda hierarchy on
quasi-periodic background revisited}, Math. Nach. {\bf 282:4}, 526--539 (2009).
\bibitem{km} S. Kamvissis, {\em On the long time behavior of the doubly infinite Toda
lattice under initial data decaying at infinity,}
Comm. Math. Phys., {\bf 153-3}, 479--519 (1993).
\bibitem{kt} S. Kamvissis and G. Teschl, {\em Stability of periodic soliton equations
under short range perturbations},
Phys. Lett. A, {\bf 364-6}, 480--483 (2007).
\bibitem{kt2} S. Kamvissis and G. Teschl, {\em Stability of the periodic Toda lattice under short range perturbations}, \arxiv{0705.0346}.
\bibitem{krt} H. Kr\"uger and G. Teschl, {\em Stability of the periodic Toda lattice in the soliton region}, Int. Math. Res. Not. (to appear).
\bibitem{mu} N.I. Muskhelishvili, {\em Singular Integral Equations}, P. Noordhoff Ltd.,
Groningen, 1953.
\bibitem{nh} V.Yu. Novokshenov and I.T. Habibullin, {\em Nonlinear differential-difference
schemes integrable by the method of the inverse scattering problem. Asymptotics of the
solution for $t\to\infty$}, Sov. Math. Doklady {\bf 23/2}, 304--307 (1981).
\bibitem{proe} S. Pr\"ossdorf, {\em Some Classes of Singular Equations}, North-Holland,
Amsterdam, 1978.
\bibitem{tist} G. Teschl, {\em Inverse scattering transform for the Toda hierarchy},
Math. Nach. {\bf 202}, 163--171 (1999).
\bibitem{tivp} G. Teschl, {\em On the initial value problem of the Toda and Kac-van Moerbeke
hierarchies}, in "Differential Equations and Mathematical Physics", R. Weikard and G. Weinstein (eds.), 375-384, AMS/IP Studies in Advanced Mathematics {\bf 16}, Amer. Math. Soc., Providence, 2000.
\bibitem{tjac} G. Teschl, {\em Jacobi Operators and Completely Integrable Nonlinear Lattices}, Math. Surv. and Mon. {\bf 72}, Amer. Math. Soc., Rhode Island, 2000.
\bibitem{taet} G. Teschl, {\em Almost everything you always wanted to know about the
Toda equation}, Jahresber. Deutsch. Math.-Verein. {\bf 103}, no. 4, 149--162 (2001).
\bibitem{ta} M. Toda, {\em Theory of Nonlinear Lattices}, 2nd enl. ed.,
Springer, Berlin, 1989.
\bibitem{zh} X. Zhou, {\em The Riemann--Hilbert problem and inverse scattering},
SIAM J. Math. Anal.  {\bf 20-4}, 966--986 (1989).
\end{thebibliography}
\end{document}